\newcommand{\pred}{\mathrm{pred}}
\newcommand{\comp}{\mathrm{comp}}
\newcommandx{\celine}[2][1=]{\todo[inline,linecolor=green,backgroundcolor=green!25,bordercolor=green,caption={\normalsize \textbf{Celine}},#1]{\normalsize #2}}
\newcommandx{\jesper}[2][1=]{\todo[inline,linecolor=red,backgroundcolor=red!25,bordercolor=red,caption={\normalsize \textbf{Jesper}},#1]{\normalsize #2}}
\title{On the Fine-grained Parameterized Complexity of Partial Scheduling to Minimize the Makespan} 
\titlerunning{Fine-grained Parameterized Complexity of Partial Scheduling}
\author{Jesper Nederlof}{ Utrecht University, Algorithms and Complexity group, Netherlands \and\url{https://webspace.science.uu.nl/~neder003/} }{ j.nederlof@uu.nl}{}{ERC project no. 617951. and no. 853234. and NWO project no. 024.002.003}
\author{C\'eline M. F. Swennenhuis}{Eindhoven University of Technology, Combinatorial Optimization group, Netherlands \and \url{https://research.tue.nl/nl/persons/c\%C3\%A9line-swennenhuis} }{c.m.f.swennenhuis@tue.nl}{}{NWO project no. 613.009.031b, ERC project no. 617951.}
\authorrunning{J. Nederlof and C.\,M.\,F. Swennenhuis}
\keywords{Fixed-Parameter Tractability, Scheduling, Precedence Constraints}
\begin{document}

\maketitle 
\begin{abstract} 
We study a natural variant of scheduling that we call \emph{partial scheduling}: In this variant an instance of a scheduling problem along with  an integer $k$ is given and one seeks an optimal schedule where not all, but only $k$ jobs, have to be processed.

Specifically, we aim to determine the fine-grained parameterized complexity of partial scheduling problems parameterized by $k$ for all variants of scheduling problems that minimize the makespan and involve unit/arbitrary processing times, identical/unrelated parallel machines, release/due dates, and precedence constraints.
That is, we investigate whether algorithms with runtimes of the type $f(k)n^{\mathcal{O}(1)}$ or $n^{\mathcal{O}(f(k))}$ exist for a function $f$ that is as small as possible.

Our contribution is two-fold: First, we categorize each variant to be either in $\mathsf{P}$, $\mathsf{NP}$-complete and fixed-parameter tractable by $k$, or $\mathsf{W}[1]$-hard parameterized by $k$. Second, for many interesting cases we further investigate the run time on a finer scale and obtain run times that are (almost) optimal assuming the Exponential Time Hypothesis.
As one of our main technical contributions, we give an $\mathcal{O}(8^kk(|V|+|E|))$ time algorithm to solve instances of partial scheduling problems minimizing the makespan with unit length jobs, precedence constraints and release dates, where $G=(V,E)$ is the graph with precedence constraints.

\end{abstract}
\clearpage 
\setcounter{page}{1}

\section{Introduction}
Scheduling is one of the most central application domains of combinatorial optimization.
In the last decades, huge combined effort of many researchers led to major progress on understanding the worst-case computational complexity of almost all natural variants of scheduling: By now, for most of these variants it is known whether they are $\mathsf{NP}$-complete or not. 
Scheduling problems provide the context of some of the most classic approximation algorithms. For example, in the standard textbook by Shmoys and Williamson on approximation algorithms~\cite{DBLP:books/daglib/0030297} a wide variety of techniques are illustrated by applications to scheduling problems. 
See also the standard textbook on scheduling by Pinedo~\cite{Pinedo:2008:STA:1477600} for more background.

Instead of studying approximation algorithms, another natural way to deal with $\mathsf{NP}$-completeness is \emph{Parameterized Complexity} (PC). 
%
%
While the application of general PC theory to the area of scheduling has still received considerably less attention than the approximation point of view, recently its study has seen explosive growth, as witnessed by a plethora of publications (e.g.~\cite{DBLP:journals/scheduling/BessyG19,DBLP:journals/corr/abs-1911-12350, lente, DBLP:journals/mp/MnichW15, DBLP:conf/door/BevernBBKTW16, van2015interval}).
Additionally, many recent results and open problems can be found in a survey by Mnich and van Bevern~\cite{mnich2018parameterized}, and even an entire workshop on the subject was recently held~\cite{lorentz}.

In this paper we advance this vibrant research direction with a complete mapping of how several standard scheduling parameters influence the parameterized complexity of minimizing the makespan in a natural variant of scheduling problems that we call \emph{partial scheduling}.
Next to studying the classical question of whether parameterized problems are in $\mathsf{P}$, $\mathsf{FPT}$ or $W$-hard, we also follow the well-established modern perspective of `fine-grained' PC and aim at run times of the type $f(k)n^{\mathcal{O}(1)}$ or $n^{f(k)}$ for the smallest function $f$ of parameter $k$.
\subparagraph*{Partial Scheduling.}
In many scheduling problems arising in practice, the set of jobs to be scheduled is not predetermined. We refer to this as \emph{partial scheduling}. Partial scheduling is well-motivated from practice, as it arises naturally for example in the following scenarios:
\begin{enumerate}
    \item Due to uncertainties a \emph{close-horizon approach} may be employed and only few jobs out of a big set of jobs will be scheduled in a short but fixed time-window,
    \item In freelance markets typically a large database of jobs is available and a freelancer is interested in selecting only a few of the jobs to work on,
    \item The selection of the jobs to process may resemble other choices the scheduler should make, such as to outsource non-processed jobs to various external parties.
\end{enumerate}
Partial scheduling has been previously studied in the equivalent forms of 
\emph{maximum throughput scheduling}~\cite{DBLP:conf/esa/Sgall12} (motivated by the first example setting above),
\emph{job rejection}~\cite{shabtay2013survey}, \emph{scheduling with outliers}~\cite{GuptaKKS09}, \emph{job selection}~\cite{eun2017maximizing,koulamas2013note,DBLP:journals/candie/YangG07} and its special case \emph{interval selection}~\cite{DBLP:journals/mor/ChuzhoyOR06}.

In this paper, we conduct a rigorous study of the parameterized complexity of partial scheduling,
parameterized by \emph{the number of jobs to be scheduled}.
We denote this number by $k$.
While several isolated results concerning the parameterized complexity of partial scheduling do exist, this parameterization has (somewhat surprisingly) not been rigorously studied yet.\footnote{We compare the previous works and other relevant studied parameterization in the end of this section.} We address this and study the parameterized complexity of the (arguably) most natural variants of the problem. We fix as objective to minimize the makespan while scheduling at least $k$ jobs, for a given integer $k$ and study all variants with the following characteristics:
\begin{itemize}
 \item $1$ machine, identical parallel machines or unrelated parallel machines,
 \item release/due dates, unit/arbitrary processing times, and precedence constraints.
\end{itemize}
Note that a priori this amounts to $3\times 2\times 2 \times 2 \times 2 =  48$ variants.

\subsection{Our Results}
We give a classification of the parameterized complexity of these 48 variants. Additionally, for each variant that is not in $\mathsf{P}$, we give algorithms solving them and lower bounds under ETH.  
To easily refer to a variant of the scheduling problem, we use the standard three-field notation by Graham et al.~\cite{graham1979optimization}. See Section~\ref{sec:prel} for an explanation of this notation.
To accommodate our study of partial scheduling, we extend the $\alpha|\beta|\gamma$ notation as follows:

\begin{definition}
We let $k$-sched in the $\gamma$-field indicate that we only schedule $k$ out of $n$ jobs.
\end{definition}
We study the fine-grained parameterized complexity of all problems $\alpha|\beta|\gamma$, where $\alpha \in \{1,P,R\}$, the options for $\beta$ are all combinations for $r_j,prec,d_j,p_j=1$, and $\gamma$ is fixed to $\gamma = k\text{-sched},C_{\max}$.
Our results are explicitly enumerated in Table~\ref{tab:tricho}.
\newcommand{\result}[1]{\textcolor{black}{\textbf{#1}}}
\newcommand{\pline}{\cline{2-8}}
\begin{table}[H]
\scalebox{0.75}{
\hspace{-0.5em}
\begin{tabular}{l|l|l|l|l|l|l|l}
&&\multirow{2}{*}{Problem Description} & \multirow{2}{0.9in}{Parameterized Complexity in $k$}  & \multirow{2}{0.3in}{Result Type}& \multicolumn{2}{c|}{Lower Bound under ETH } & \multirow{2}{0.45in}{Run Time}  \\
\cline{6-7}
&& &   &  & Excluded Run Time & Reduction from &  \\ \hline \hline

\multirow{20}{0.1in}{\rotatebox[origin=c]{90}{Precedence Relations}}&1&$1|  \text{prec},  p_j=1  |\gamma$   & $\mathsf{P}$  & \result{[A]} & & &$n^{\mathcal{O}(1)}$\\\pline
&2&$1|  r_j,    \text{prec},  p_j=1  |\gamma$   & $\mathsf{P}$ & \result{[A]} && &$n^{\mathcal{O}(1)}$ \\\pline
&3&$1| d_j,  \text{prec},  p_j=1  |\gamma$   & $\mathsf{W}[1]$-hard  &\result{[B]} &$n^{o(k / \log k)}$ & 3\textsc{-Coloring}  &$n^{\mathcal{O}(k)}$\\\pline
&4&$1|  r_j,  d_j,  \text{prec},  p_j=1  |\gamma$   & $\mathsf{W}[1]$-hard  &  \result{[B]} &  $n^{o(k / \log k)}$ &3\textsc{-Coloring}&$n^{\mathcal{O}(k)}$\\\pline

&5&$P|  \text{prec},  p_j=1  |\gamma$   & $\mathsf{FPT}$ &  \result{[C]} & $\mathcal{O}^*(2^{o(\sqrt{k \log k})})$ &$P|  \text{prec},  p_j=1  |C_{\max}$& $\mathcal{O}^*(2^{\mathcal{O}(k)})$ \\\pline
&6&$P|  r_j, \text{prec},  p_j=1  |\gamma$   & $\mathsf{FPT}$ &  \result{[C]} &  $\mathcal{O}^*(2^{o(\sqrt{k \log k})})$ &$P|  \text{prec},  p_j=1  |C_{\max}$& $\mathcal{O}^*(2^{\mathcal{O}(k)})$\\\pline
&7&$P|  d_j,  \text{prec},  p_j=1  |\gamma$   & $\mathsf{W}[1]$-hard  &  \result{[B]} &  $n^{o(k / \log k)}$ &3\textsc{-Coloring}&$n^{\mathcal{O}(k)}$\\\pline
&8&$P|  r_j,  d_j, \text{prec},  p_j=1  |\gamma$ &   $\mathsf{W}[1]$-hard  &  \result{[B]} & $n^{o(k / \log k)}$ &3\textsc{-Coloring}&$n^{\mathcal{O}(k)}$\\ \pline

&9&$1| \text{prec} |\gamma$   & $\mathsf{W}[1]$-hard& \result{[D]} & $n^{o(\sqrt{k})}$ &$k$\textsc{-Clique}&$n^{\mathcal{O}(k)}$ \\\pline
&10&$1|  r_j, \text{prec}  |\gamma$  & $\mathsf{W}[1]$-hard  &  \result{[D]} & $n^{o(k / \log k)}$ &\textsc{Partitioned S.I.}&$n^{\mathcal{O}(k)}$ \\\pline
&11&$1| d_j,  \text{prec}  |\gamma$   & $\mathsf{W}[1]$-hard  &  \result{[D]} &  $n^{o(k / \log k)}$ &\textsc{Partitioned S.I.}&$n^{\mathcal{O}(k)}$ \\\pline
&12&$1|  r_j,  d_j,  \text{prec} |\gamma$   & $\mathsf{W}[1]$-hard  &  \result{[D]} &  $n^{o(k / \log k)}$ &\textsc{Partitioned S.I.}&$n^{\mathcal{O}(k)}$ \\\pline

&13&$P| \text{prec} |\gamma$ &   $\mathsf{W}[1]$-hard&  \result{[D]} & $n^{o(k / \log k)}$ &\textsc{Partitioned S.I.}&$n^{\mathcal{O}(k)}$ \\\pline
&14&$P|  r_j, \text{prec}  |\gamma$   & $\mathsf{W}[1]$-hard  &  \result{[D]} &  $n^{o(k / \log k)}$ &\textsc{Partitioned S.I.}&$n^{\mathcal{O}(k)}$\\\pline
&15&$P| d_j,  \text{prec} |\gamma$   & $\mathsf{W}[1]$-hard & \result{[D]} & $n^{o(k / \log k)}$  &\textsc{Partitioned S.I.}&$n^{\mathcal{O}(k)}$ \\\pline
&16&$P| r_j, d_j,  \text{prec} |\gamma$   & $\mathsf{W}[1]$-hard  &  \result{[D]} & $n^{o(k / \log k)}$  &\textsc{Partitioned S.I.}&$n^{\mathcal{O}(k)}$ \\\pline

&17&$R| \text{prec} |\gamma$ &   $\mathsf{W}[1]$-hard &  \result{[D]} & $n^{o(k / \log k)}$ &\textsc{Partitioned S.I.}&$n^{\mathcal{O}(k)}$\\\pline
&18&$R|  r_j, \text{prec}  |\gamma$   & $\mathsf{W}[1]$-hard &  \result{[D]} &  $n^{o(k / \log k)}$ &\textsc{Partitioned S.I.}&$n^{\mathcal{O}(k)}$\\\pline
&19&$R| d_j,  \text{prec} |\gamma$   & $\mathsf{W}[1]$-hard &  \result{[D]} &  $n^{o(k / \log k)}$ &\textsc{Partitioned S.I.}&$n^{\mathcal{O}(k)}$ \\\pline
&20&$R| r_j, d_j,  \text{prec} |\gamma$   & $\mathsf{W}[1]$-hard &  \result{[D]} &$n^{o(k / \log k)}$ &\textsc{Partitioned S.I.}&$n^{\mathcal{O}(k)}$ \\
\hline
\hline	
	

\multirow{20}{0.1in}{\rotatebox[origin=c]{90}{No Precedence Relations}}&21&$1| p_j=1  |\gamma$   & $\mathsf{P}$ &  \result{[E]} && &$n^{\mathcal{O}(1)}$ \\\pline
&22&$1|  r_j,  p_j=1  |\gamma$   & $\mathsf{P}$ &  \result{[E]} &&  &$n^{\mathcal{O}(1)}$\\\pline
&23&$1|  d_j,  p_j=1  |\gamma$   & $\mathsf{P}$ &  \result{[E]} &&  &$n^{\mathcal{O}(1)}$\\\pline
&24&$1|  r_j,  d_j,  p_j=1  |\gamma$   & $\mathsf{P}$ &  \result{[E]} && &$n^{\mathcal{O}(1)}$ \\\pline

&25&$P| p_j=1  |\gamma$   & $\mathsf{P}$ &  \result{[E]} && &$n^{\mathcal{O}(1)}$ \\\pline
&26&$P| r_j, p_j=1  |\gamma$   & $\mathsf{P}$ &  \result{[E]} &&  &$n^{\mathcal{O}(1)}$\\\pline
&27&$P| d_j,  p_j=1  |\gamma$   & $\mathsf{P}$ &  \result{[E]} &&  &$n^{\mathcal{O}(1)}$\\\pline
&28&$P|  r_j,  d_j,  p_j=1  |\gamma$   & $\mathsf{P}$ &  \result{[E]} &&  &$n^{\mathcal{O}(1)}$\\\pline

&29&$1||\gamma$ &   $\mathsf{P}$ & \result{[F]} &&  &$n^{\mathcal{O}(1)}$\\\pline
&30&$1| r_j  |\gamma$ &  $\mathsf{P}$ &  \result{[F]} && &$n^{\mathcal{O}(1)}$ \\\pline
&31&$1|  d_j |\gamma$ &  $\mathsf{P}$ & \result{[F]} &&  &$n^{\mathcal{O}(1)}$  \\\pline
&32&$1|  r_j,  d_j  |\gamma$ &   $\mathsf{FPT}$ &  \result{[G]} & $\mathcal{O}^*(2^{o(k)})$ &\textsc{Subset Sum}&$\mathcal{O}^*(2^{\mathcal{O}(k)})$ \\\pline

&33&$P|  |\gamma$ &   $\mathsf{FPT}$ &  \result{[G]} & $\mathcal{O}^*(2^{o(k)})$ &\textsc{Subset Sum} &$\mathcal{O}^*(2^{\mathcal{O}(k)})$ \\\pline
&34&$P|  r_j |\gamma$   & $\mathsf{FPT}$ &  \result{[G]} & $\mathcal{O}^*(2^{o(k)})$ &\textsc{Subset Sum}&$\mathcal{O}^*(2^{\mathcal{O}(k)})$ \\\pline
&35&$P|  d_j  |\gamma$   & $\mathsf{FPT}$ &  \result{[G]} & $\mathcal{O}^*(2^{o(k)})$ &\textsc{Subset Sum}&$\mathcal{O}^*(2^{\mathcal{O}(k)})$ \\\pline
&36&$P|  r_j, d_j  |\gamma$   & $\mathsf{FPT}$ & \result{[G]} & $\mathcal{O}^*(2^{o(k)})$ &\textsc{Subset Sum}&$\mathcal{O}^*(2^{\mathcal{O}(k)})$\\\pline

&37&$R|  |\gamma$ &   $\mathsf{FPT}$  &  \result{[G]} & $\mathcal{O}^*(2^{o(k)})$ &\textsc{Subset Sum}&$\mathcal{O}^*(2^{\mathcal{O}(k)})$ \\\pline
&38&$R|  r_j |\gamma$   & $\mathsf{FPT}$ &  \result{[G]} & $\mathcal{O}^*(2^{o(k)})$ &\textsc{Subset Sum}&$\mathcal{O}^*(2^{\mathcal{O}(k)})$ \\\pline
&39&$R|  d_j  |\gamma$   & $\mathsf{FPT}$ &  \result{[G]} & $\mathcal{O}^*(2^{o(k)})$ &\textsc{Subset Sum}&$\mathcal{O}^*(2^{\mathcal{O}(k)})$ \\\pline
&40&$R|  r_j, d_j  |\gamma$   & $\mathsf{FPT}$ &  \result{[G]} & $\mathcal{O}^*(2^{o(k)})$ &\textsc{Subset Sum}&$\mathcal{O}^*(2^{\mathcal{O}(k)})$\\\pline
\end{tabular}
}
\caption{
	The fine-grained parameterized complexity of partial scheduling, where $\gamma$ denotes $k$-sched, $C_{\max}$ and \textsc{S.I.} abbreviates  \textsc{Subgraph Isomorphism}. Since $p_j=1$ implies that the machines are identical, the mentioned number of $48$ combinations reduces to $40$ different scheduling problems. The $\mathcal{O}^*$ notation omits factors polynomial in the input size.
}
\label{tab:tricho}
\end{table}

The rows of Table~\ref{tab:tricho} are lexicographically sorted on (i) precedence relations / no precedence relations, (ii) a single machine, identical machines or unrelated machines (iii) release dates and/or deadlines.
Because their presence has a major influence on the character of the problem we stress the distinction between variants with and without \emph{precedence constraints}.\footnote{A precedence constraint $a\prec b$ enforces that job $a$ needs to be finished before job $b$ can start.}
On a high abstraction level, our contribution is two-fold:
\begin{enumerate}
	\item We present a \emph{classification} of the complexity of all aforementioned variants of partial scheduling with the objective of minimizing the makespan. Specifically, we classify all variants to be either solvable in polynomial time, to be fixed-parameter tractable in $k$ and NP-hard, or to be $\mathsf{W}[1]$-hard. 
	\item For most of the studied variants we present both an algorithm and a lower bound that shows that our algorithm cannot be significantly improved unless the Exponential Time Hypothesis (ETH) fails.
\end{enumerate}
Thus, while we completely answer a classical type of question in the field of Parameterized Complexity, we pursue in our second contribution a more modern and fine-grained understanding of the best possible run time with respect to the parameter $k$.
For several of the studied variants, the lower bounds and algorithms listed in Table~\ref{tab:tricho} follow relatively quickly. However, for many other cases we need substantial new insights to obtain (almost) matching upper and lower bounds on the runtime of the algorithms solving them.
We have grouped the rows in \emph{result types} \result{[A]}-\result{[G]} depending on our methods for determining their complexity. 

\subsection{Our new Methods}

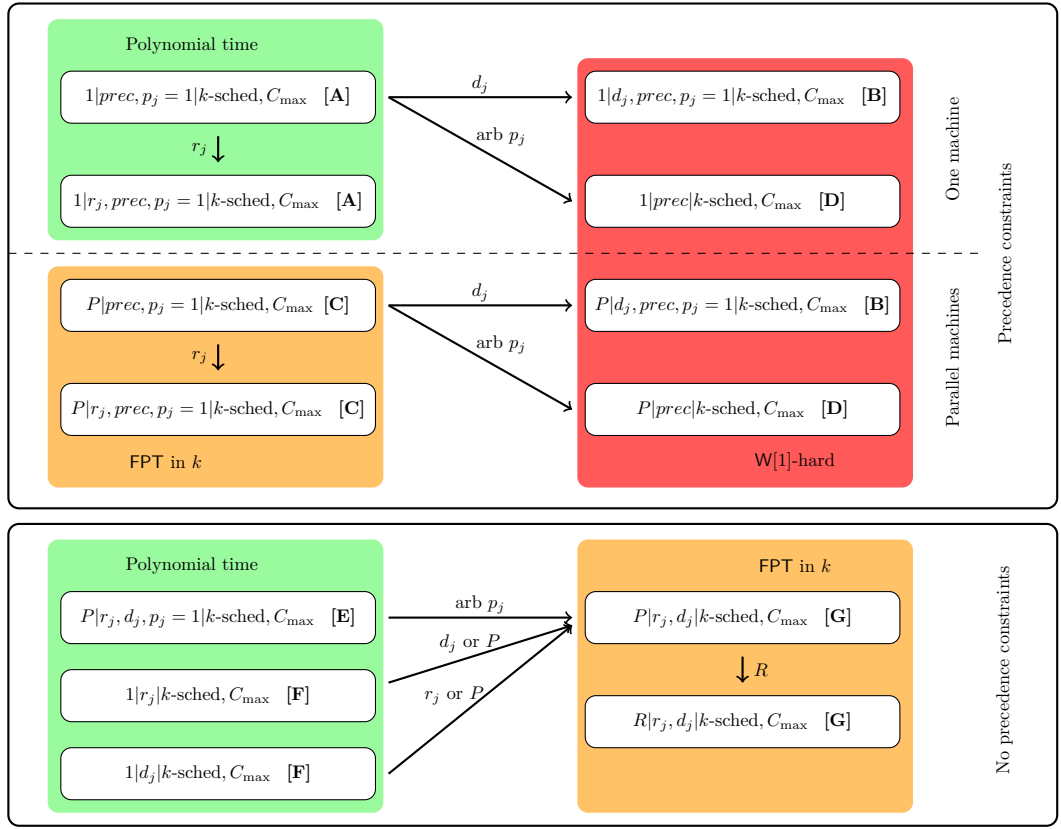
\begin{figure}[t]
    \centering
     \begin{tikzpicture} [scale = 0.69, transform shape]

        \fill [rounded corners, fill= green!95!blue!40!white] (-8.25,-4) rectangle (-1.85,-9.25);
        \node at (-5.5,-4.5) {Polynomial time};
        
        \fill [rounded corners , fill= red!40!yellow!60!white] (1.85,-4) rectangle (8.25,-9.25);
        \node at (6,-4.5) {$\mathsf{FPT}$ in $k$};
        
        \draw [rounded corners, fill = white!100!black] (-8,-7.5) rectangle (-2,-6.5) node[pos=.5] {$1 |r_j |k\text{-sched},C_{\max}$ \, \result{[F]}};
        
        \draw [rounded corners, fill = white!100!black] (-8,-9) rectangle (-2,-8) node[pos=.5] {$1 |d_j |k\text{-sched},C_{\max}$ \, \result{[F]}};
        
        \draw [rounded corners, fill = white!100!black] (-8,-6 ) rectangle (-2,-5) node[pos=.5] {$P | r_j,d_j, p_j=1|k\text{-sched},C_{\max}$ \, \result{[E]}};
        
        \draw [rounded corners, fill = white!100!black] (2,-6) rectangle (8,-5) node[pos=.5] {$P |r_j,d_j|k\text{-sched},C_{\max}$ \, \result{[G]}};
        
        \draw [rounded corners, fill = white!100!black] (2,-8) rectangle (8,-7) node[pos=.5] {$R |r_j,d_j|k\text{-sched},C_{\max}$ \, \result{[G]}};
;

        \draw[->,thick] (5,-6.25) to[] (5,-6.75);
        \node at (5.35,-6.5) {$R$};
        
        \draw[->,thick] (-1.75,-5.5) to[] (1.75,-5.5);
        \node at (0,-5.25) {arb $p_j$};
        
        \draw[->,thick]  (-1.75,-6.75) to (1.75,-5.65);
       \node at (-0.2,-6) {$d_j$ or $P$};
       
       \draw[->,thick]  (-1.75,-8.5) to (1.75,-5.65);
       \node at (-0.5,-7) {$r_j$ or $P$};

        
        
        \fill [rounded corners, fill= green!95!blue!40!white] (-8.25,6) rectangle (-1.85,1.75);
        \node at (-5.5,5.5) {Polynomial time};
        
        \fill [rounded corners , fill= red!65!white] (1.85,-3) rectangle (8.25,5.25);
        \node at (6,-2.5) {$\mathsf{W}[1]$-hard};
        
        \fill [rounded corners, fill= red!40!yellow!60!white] (-8.25,-3) rectangle (-1.85,1.25);
        \node at (-6,-2.5) {$\mathsf{FPT}$ in $k$};
        
        \draw[thick,rounded corners] (-9,-3.4) rectangle (11,6.3);    
        \draw [thick,rounded corners] (-9,-3.7) rectangle (11,-9.5);
        \node at (10,-6.5) {\rotatebox{90}{No precedence constraints}};
        \node at (10,1.5) {\rotatebox{90}{Precedence constraints}};
        
        \draw [rounded corners, fill = white!100!black] (-8,2) rectangle (-2,3) node[pos=.5] {$1 | r_j,prec, p_j=1|k\text{-sched},C_{\max}$ \, \result{[A]}};
        
        \draw [rounded corners, fill = white!100!black] (-8,4 ) rectangle (-2,5) node[pos=.5] {$1 | prec, p_j=1|k\text{-sched},C_{\max}$ \, \result{[A]}};
        
        \draw [rounded corners, fill = white!100!black] (2,4) rectangle (8,5) node[pos=.5] {$1 |d_j, prec, p_j=1|k\text{-sched},C_{\max}$ \, \result{[B]}};
        
        \draw [rounded corners, fill = white!100!black] (2,2) rectangle (8,3) node[pos=.5] {$1 | prec|k\text{-sched},C_{\max}$ \, \result{[D]}};
        
        \draw [rounded corners, fill = white!100!black] (-8,0) rectangle (-2,1) node[pos=.5] {$P | prec, p_j=1|k\text{-sched},C_{\max}$\, \result{[C]}};
        
        \draw [rounded corners, fill = white!100!black] (-8,-2) rectangle (-2,-1) node[pos=.5] {$P |r_j, prec, p_j=1|k\text{-sched},C_{\max}$ \, \result{[C]}};
        
        \draw [rounded corners, fill = white!100!black] (2,-2) rectangle (8,-1) node[pos=.5] {$P | prec|k\text{-sched},C_{\max}$ \, \result{[D]}};
        
        \draw [rounded corners, fill = white!100!black] (2,0) rectangle (8,1) node[pos=.5] {$P |d_j, prec, p_j=1|k\text{-sched},C_{\max}$ \, \result{[B]}};

        \draw[->,thick] (-5,3.75) to[] (-5,3.25);
        \node at (-5.35,3.5) {$r_j$};
        
        \draw[->,thick] (-1.75,4.5) to[] (1.75,4.5);
        \node at (0,4.75) {$d_j$};
        
        \draw[->,thick] (-1.75,4.5) to[] (1.75,2.5);
       \node at (0.4,3.75) {arb $p_j$};
        
        
        \draw[->,thick] (-5,-0.25) to[] (-5,-0.75);
        \node at (-5.35,-0.5) {$r_j$};
        
        \draw[->,thick] (-1.75,0.5) to[] (1.75,-1.5);
        \node at (0.4,-0.25) {arb $p_j$};
        
        \draw[->,thick] (-1.75,0.5) to[] (1.75,0.5);
        \node at (0,0.75) {$d_j$};
        
        \draw[dashed] (-9,1.5) -- (9.5,1.5);
        \node at (9,3.5) {\rotatebox{90}{One machine}};
        \node at (9,-0.5) {\rotatebox{90}{Parallel machines}};
        

    \end{tikzpicture}
    
    \caption{An illustration of the various result types as indicated in Table~\ref{tab:tricho}. Arrows indicate how a problem is generalized by another problem.}
    \label{overview}
\end{figure}

We now describe some of our most significant technical contributions for obtaining the various types (listed as \result{[A]}-\result{[G]} in Table~\ref{tab:tricho}) of results. Note that we skip some less interesting cases in this introduction; for a complete argumentation of all results from Table~\ref{tab:tricho} we refer to Appendix~\ref{sec:cases}.  
The main building blocks and logical implications to obtain the results from Table~\ref{tab:tricho} are depicted in Figure~\ref{overview}. We now discuss these building blocks of Figure~\ref{overview} in detail.

\subparagraph{Precedence Constraints.}

Our main technical contribution concerns result type \result{[C]}. The simplest of the two cases, $P|\text{prec}, p_j=1|k\text{-sched},C_{\max}$, cannot be solved in $\mathcal{O}^*(2^{o(\sqrt{k \log k})})$ time assuming the Exponential Time Hypothesis and not in $2^{o(k)}$ unless sub-exponential time algorithms for the \textsc{Biclique} problem exist, due to reductions by Jansen et al.~\cite{jansen2016precedence}.
Our contribution lies in the following theorem that gives an upper bound for the more general of the two problems that matches the latter lower bound:
\begin{theorem}\label{thm:main}
 $P|r_j,\text{prec},p_j=1|k\text{-sched}, C_{\max}$ can be solved in $\mathcal{O}(8^kk(|V|+|E|))$ time,\footnote{We assume basic arithmetic operations with the release dates take constant time.} where $G = (V,E)$ is the precedence graph given as input.
\end{theorem}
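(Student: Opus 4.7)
My plan is to combine the color-coding paradigm with a time-slot based dynamic programming scheme, exploiting the fact that only $k$ jobs are scheduled to keep the DP state of size $2^{O(k)}$.

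I would first set up the solution structure. A feasible partial schedule consists of a set $S \subseteq V$ with $|S|=k$ together with a time assignment $\sigma\colon S\to \{0,\ldots,T-1\}$ satisfying the machine capacity ($\le m$ jobs per slot), release dates ($\sigma(j)\ge r_j$), and precedence constraints ($\sigma(i)<\sigma(j)$ whenever $i\prec j$ and $i,j\in S$). Since $|S|=k$ the makespan is at most $k$, so the schedule is essentially a sequence of at most $k$ antichains of $G$ that together cover $k$ vertices. The algorithm then searches over all such sequences efficiently.

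The core of the approach is to replace the enumeration of the $\binom{|V|}{k}$ possible sets $S$ by a search over \emph{rainbow} schedules. Concretely, I would color every vertex of $V$ with a color in $[k]$, either uniformly at random (and repeat $O(e^k)$ times) or using a $(|V|,k)$-perfect hash family, and then look for the best schedule whose $k$ jobs all carry pairwise distinct colors. For each fixed coloring I would set up a DP with states $(C,t)$, where $C\subseteq[k]$ and $t\in\{0,\ldots,k\}$, whose value records whether there is a rainbow partial schedule that uses exactly the colors in $C$ and has makespan at most $t$. A transition from $(C',t-1)$ to $(C,t)$ chooses the new color subset $D=C\setminus C'$ together with (implicitly) one job per color in $D$, such that the chosen jobs form an antichain in $G$, each is released by time $t$, and each has all its predecessors inside the previously scheduled set. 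Combined with the $2^{O(k)}$ number of DP states, the $e^{O(k)}$ size of the color family, and an $O(k(|V|+|E|))$-time transition routine, careful bookkeeping yields the desired $O(8^kk(|V|+|E|))$ upper bound.

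The main obstacle I anticipate is that the DP state remembers only the \emph{set} of colors used so far, whereas precedence constraints depend on the \emph{specific} jobs chosen for each color: different jobs sharing a color generally have very different predecessor sets in $G$. The delicate step in the plan is therefore to show that the transitions can nevertheless be evaluated from purely color-level information, via a single linear-time traversal of $G$ per DP step that, once the set of already-scheduled colors is fixed, jointly identifies which vertices of $V$ can legitimately join the next time slot as an antichain. I expect most of the technical work to lie here: one has to formulate an invariant strong enough that the information collapsed when identifying same-color jobs is really irrelevant to the feasibility of future completions, and to organize the bookkeeping so that the precedence traversal contributes only an additive $|V|+|E|$ (rather than $|V|^2$) factor. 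Getting the clean constant $8$ in the base, as opposed to the slightly larger $3e$ that a black-box ``color coding $\times$ $3^k$ subset DP'' analysis would yield, will likely require a tighter coupling between the hash family construction and the DP transitions.
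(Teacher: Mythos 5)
Your plan has a genuine gap at exactly the point you yourself flag as ``the delicate step'', and it is not a matter of bookkeeping: the DP state $(C,t)$, recording only the \emph{set of colors} used so far, is information-theoretically too weak to support the transitions. To decide whether a vertex $v$ may join slot $t$ you must know whether \emph{all} predecessors of $v$ lie in the already-scheduled set, and two rainbow partial schedules realizing the same pair $(C,t)$ can have entirely different job sets with incomparable predecessor closures, hence entirely different sets of legally addable vertices. Unlike $k$-path, where the interface to the future is a single endpoint that can be carried in the state at a factor-$n$ cost, here the interface is the whole antichain of maximal scheduled jobs; no invariant under which collapsing jobs to colors preserves feasibility of future completions is supplied, and I do not believe one exists. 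A concrete failure: let $a\prec b$ and $a'\prec b'$ with no other relations, and give $a$ and $a'$ the same color. The state $(\{\mathrm{color}(a)\},1)$ is reachable via $\{a\}$ or via $\{a'\}$, but only the first continuation allows $b$ at time $2$ and only the second allows $b'$; any transition rule that sees only the color set must wrongly accept or wrongly reject one of these.

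The paper resolves the state-explosion problem by a completely different route: the DP is indexed by the antichain $A$ of maximal scheduled jobs itself (so transitions are exact), and the exponential blow-up is tamed not by hashing but by a new combinatorial parameter, the \emph{depth} $d^t(A)=|\mathrm{pred}(A)|+|\min(G^t-\mathrm{comp}(A))|$. One proves (i) that some optimal schedule is always witnessed by an antichain of depth at most $k$, via an exchange argument together with an auxiliary greedy completion test $R(A,t)$, and (ii) that any DAG has at most $4^k$ antichains of depth at most $k$, enumerable in $\mathcal{O}(4^k(|V|+|E|))$ time, by first bounding the number of \emph{maximal} such antichains by $2^k$. The base $8$ then falls out as $4^k$ states times $2^k$ choices for the set of jobs placed in the last slot, times $k\le C_{\max}$ time steps. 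If you wanted to salvage a color-coding route you would have to carry the actual frontier jobs in the state, at which point the coloring buys nothing; I recommend abandoning this approach in favor of a direct argument about which antichains need to be visited.
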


Theorem~\ref{thm:main} will be proved in Section~\ref{Chapter:DP}. The first idea behind the proof is based on a natural\footnote{A similar dynamic programming approach was also present in for example~\cite{CyganPPW14}.} dynamic programming algorithm indexed by anti-chains of the partial order naturally associated with the precedence constraints. However, evaluating this dynamic program na\"ively would lead to an $n^{\mathcal{O}(k)}$ time algorithm, where $n$ is the number of jobs.

Our key idea is to only compute a subset of the table entries of this dynamic programming algorithm, guided by a new parameter of an antichain called the \emph{depth}.
Intuitively, the depth of an antichain $A$ indicates the number of jobs that can be scheduled after $A$ in a feasible schedule without violating the precedence constraints.

We prove Theorem~\ref{thm:main} by showing we may restrict attention in the dynamic programming algorithm to antichains of depth at most~$k$, and by bounding the number of antichains of depth at most~$k$ indirectly by bounding the number of \emph{maximal} antichains of depth at most~$k$. We believe this methodology should have more applications for scheduling problems with precedence constraints.

Surprisingly, the positive result of Theorem~\ref{thm:main} is in \emph{stark contrast} with the seemingly symmetric case where only deadlines are present: Our next result, indicated as \result{[B]} in Figure~\ref{overview} shows it is much harder:

\begin{theorem}\label{thm:wh}
 $P|d_j,prec,p_j=1|k\text{-sched}, C_{\max}$ is $\mathsf{W}[1]$-hard, and cannot be solved in $n^{o(k / \log k)}$ time assuming the ETH.
\end{theorem}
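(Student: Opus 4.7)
The plan is to reduce from 3-\textsc{Coloring}, which cannot be solved in $2^{o(|V|)}$ time under ETH, to $P|d_j,\text{prec},p_j=1|k\text{-sched},C_{\max}$. I aim to produce, from $G=(V,E)$, a scheduling instance with $n = O(|V|+|E|)$ jobs and parameter $k = \Theta(|V|)$ that is feasible iff $G$ is 3-colorable. Since $\log n = \Theta(\log k)$ under this scaling, an $n^{o(k/\log k)}$-time algorithm for the scheduling problem would translate into a $2^{o(|V|)}$-time algorithm for 3-\textsc{Coloring}, contradicting ETH; the $\mathsf{W}[1]$-hardness then follows by viewing the reduction as a parameterized reduction (or by a minor adaptation to a direct fpt-reduction from a $\mathsf{W}[1]$-hard source such as $k$-\textsc{Clique} when needed).

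Given $G=(V,E)$, the plan is to take $m = |V|$ parallel machines, target makespan $T = 1$, and $k = |V|$. For each vertex $v \in V$ and each color $c \in \{1,2,3\}$, introduce a single unit-length job $J_v^c$ with deadline $1$ (so that, if scheduled, it must occupy time slot $0$ on some machine). For every vertex $v$ and every pair of distinct colors $c \neq c'$ I add opposing precedence arcs $J_v^c \prec J_v^{c'}$ and $J_v^{c'} \prec J_v^c$ (vertex mutex), and for every edge $(u,v) \in E$ and every color $c$ I add opposing arcs $J_u^c \prec J_v^c$ and $J_v^c \prec J_u^c$ (edge mutex). Under the natural interpretation that precedence only constrains the order between pairs of jobs that are both scheduled, opposing arcs between $X$ and $Y$ create an unsatisfiable pair of requirements ``$X$ strictly before $Y$ and $Y$ strictly before $X$'' whenever both $X$ and $Y$ are selected; hence at most one job of each mutex pair can be scheduled. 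The choice $k = |V|$, combined with the vertex mutex and pigeonhole, forces exactly one color per vertex, and the edge mutex guarantees that the induced assignment is a proper 3-coloring.

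Correctness in both directions is then immediate: a proper 3-coloring $\chi$ yields the feasible schedule running $J_v^{\chi(v)}$ on machine $v$ at time $0$, while any feasible schedule induces a proper 3-coloring via the unique $J_v^c$ chosen per vertex. The main obstacle I expect will be the precise interpretation of precedence constraints in partial scheduling: the opposing-arc mutex works transparently under the ``order-only-when-both-are-scheduled'' convention, but under the stricter convention that a scheduled successor forces its predecessor to be scheduled, opposing arcs instead make \emph{both} endpoints unschedulable, collapsing the construction. In that case, each mutex should be realized by a constant-size auxiliary gadget (for instance, dummy jobs with suitably tight deadlines whose simultaneous scheduling of both sides creates an unavoidable timing or machine-capacity conflict). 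Either way, the gadgets remain $O(1)$ in size, so $n = O(|V|+|E|)$ and $k = \Theta(|V|)$, giving the claimed $n^{o(k/\log k)}$ lower bound under ETH.
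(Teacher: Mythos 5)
There is a genuine gap, and it sits exactly where you predicted it would: the mutex gadget. First, a precedence relation must be a partial order, so the precedence graph is required to be a DAG; adding both $J_v^c \prec J_v^{c'}$ and $J_v^{c'} \prec J_v^c$ does not produce a well-formed instance of $P|d_j,\textit{prec},p_j=1|k\text{-sched},C_{\max}$ at all. Second, even ignoring that, the semantics used throughout the paper is the ``stricter'' one you mention: $a \prec b$ means $a$ must be \emph{finished} before $b$ can start, so scheduling $b$ forces $a$ to be scheduled as well (this is essential to the paper's dynamic program over downward-closed sets $\pred(A)$ and to the other reductions, e.g.\ edge jobs forcing their endpoint jobs). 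Under that semantics a two-cycle of arcs makes both jobs permanently unschedulable rather than mutually exclusive, which, as you note, collapses the construction. The proposed repair --- ``a constant-size auxiliary gadget, for instance dummy jobs with suitably tight deadlines'' --- is not a minor detail to be filled in later; designing a selection gadget that enforces ``exactly one of three'' under the real semantics is the entire technical content of this theorem, and your proposal does not supply one.

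For comparison, the paper's reduction (also from $3$\textsc{-Coloring}) works on a \emph{single} machine and gets exclusivity without any mutex arcs. It sets $k = C_{\max} = 2n'+2m'$, so every time slot of a feasible schedule must be occupied. Each vertex $v_i$ is represented by three disjoint two-job chains $v_i^a \prec w_i^a$, where all three $v_i^a$ share the early deadline $i$ and all three $w_i^a$ share the late deadline $n'+2m'+1-i$; the nested arrangement of these deadline pairs over $i$ (and analogously over the edges) pins each selected chain to a unique pair of time slots, so exactly one chain per triple can be scheduled. The colour-consistency between vertices and edges is then enforced by ordinary (acyclic) precedence arcs $u^a \prec e_j^{ab}$, $v^b \prec e_j^{ab}$. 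Separately, note that a reduction from $3$\textsc{-Coloring} with $k=\Theta(|V|)$ yields the ETH lower bound but not $\mathsf{W}[1]$-hardness; the paper obtains the latter by citing the known parameterized reduction from $k$\textsc{-Clique} to $k$\textsc{-Tasks On Time}, so your parenthetical fallback is indeed necessary and should be made explicit rather than optional.
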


Theorem~\ref{thm:wh} is a consequence of a reduction outlined in Section~\ref{sec:lbprec}.
Note the $\mathsf{W}[1]$-hardness follows from a natural reduction from the $k$-{\sc Clique} problem (presented originally by Fellows and McCartin~\cite{fellows2003parametric}), but this reduction increases the parameter $k$ to $\Omega(k^2)$ and would only exclude~$n^{o(\sqrt{k})}$ time algorithms assuming the ETH. To obtain the tighter bound from Theorem~\ref{thm:wh}, we instead provide a non-trivial reduction from the $3${\sc-Coloring} problem based on a new selection gadget.

For result type~\result{[D]}, we give a lower bound by a (relatively simple) reduction from \textsc{Partitioned Subgraph Isomorphism} in Theorem~\ref{Thm:1|prec,rj|k-Cmax} and Corollary~\ref{Cor:2|prec|k-Cmax}. Since it is conjectured that \textsc{Partitioned Subgraph Isomorphism
} cannot be solved in $n^{o(k)}$ time assuming the ETH, our reduction is a strong indication that the simple $n^{\mathcal{O}(k)}$ time algorithm (see Appendix~\ref{sec:cases}) cannot be improved significantly in this case.

\subparagraph{No Precedence Constraints.}
The second half of our classification concerns scheduling problems without precedence constraints, and is easier to obtain than the first half.
Results \result{[E], [F]} are consequences of a greedy algorithm and Moore's algorithm \cite{moore1968n} that solves the problem $1||\sum_jU_j$ in $\mathcal{O}(n\log n)$ time. Notice that this also solves the problem~$1|r_j|k\text{-sched},C_{\max}$, by reversing the schedule and viewing the release dates as the deadlines.
For result type \result{[G]} we show that a standard technique in parameterized complexity, the color coding method, can be used to get a $2^{\mathcal{O}(k)}$ time algorithm for the most general problem of the class, being $R|r_j,d_j|k\text{-sched},C_{\max}$.
All lower bounds on the run time of algorithms for problems of type \result{[G]} are by a reduction from \textsc{Subset Sum}, but for $1|r_j,d_j|k\text{-sched},C_{\max}$ this reduction is slightly different.

\subsection{Related Work}

The interest in parameterized complexity of scheduling problems recently witnessed an explosive growth, resulting in e.g. a workshop~\cite{lorentz} and a survey by Mnich and van Bevern~\cite{mnich2018parameterized} with a wide variety of open problems.

The parameterized complexity of partial scheduling parameterized by the number of processed jobs, or equivalently, the number of jobs `on time' was studied before: Fellows et al.~\cite{fellows2003parametric} studied a problem called $k$-{\sc Tasks On Time} that is equivalent to $1| d_j,prec, p_j=1|k\text{-sched},C_{\max}$ and showed that it is $\mathsf{W}[1]$-hard when parameterized by $k$,\footnote{Our results $\result{[C]}$ and $\result{[D]}$ build on and improve this result.} and $\mathsf{FPT}$ parameterized by $k$ and the width of the partially ordered set induced by the precedence constraints.
Van Bevern et al.~\cite{van2015interval} showed that the {\sc Job Interval Selection} problem, where each job is given a set of possible intervals to be processed on, is $\mathsf{FPT}$ in $k$.
Bessy et al.~\cite{DBLP:journals/scheduling/BessyG19} consider partial scheduling with a restriction on the jobs called `Coupled-Task', and also remarked the current parameterization is relatively understudied. 

Another related parameter is the number of jobs that are \emph{not scheduled}, that also has been studied in several previous works~\cite{DBLP:journals/orl/BodlaenderF95,fellows2003parametric,DBLP:journals/mp/MnichW15}. For example, Mnich and Wiese~\cite{DBLP:journals/mp/MnichW15} studied the parameterized complexity of scheduling problems with respect to the number of rejected jobs in combination with other variables as parameter. If $n$ denotes the number of given jobs, this parameter equals $n-k$.
The two parameters are somewhat incomparable in terms of applications: In some settings only few jobs out of many alternatives need to be scheduled, but in other settings rejecting a job is very costly and thus will happen rarely. However, a strong advantage of using $k$ as parameter is in terms of its computational complexity: If the version of the problem with all jobs mandatory is $\mathsf{NP}$-complete it is trivially $\mathsf{NP}$-complete for~$n-k=0$, but it may still be $\mathsf{FPT}$ in $k$.

\subsection{Organization of this paper}
This paper is organized as follows: We start with some preliminaries in Section~\ref{sec:prel}.
In Section~\ref{Chapter:DP} we present the proof of Theorem~\ref{thm:main}, and in Section~\ref{sec:lbprec} we describe the reductions for result types \result{[B]} and \result{[D]}.
In Section~\ref{sec:colcoding} we give the algorithm for result type \result{[G]} and in Section~\ref{sec:conc} we present a conclusion. In Appendix~\ref{sec:omittedproofs3} we give the proofs omitted in Section~\ref{Chapter:DP}. The Lemma's and Theorems with these omitted proofs are indicated with a $\dagger$.  Finally, in Appendix~\ref{sec:cases} we motivate all cases from Table~\ref{tab:tricho}.

\section{Preliminaries: The three-field notation by Graham et al.}\label{sec:prel} 
Throughout this paper we denote scheduling problems using the three-field classification  by Graham et al.~\cite{graham1979optimization}. Problems are classified by parameters $\alpha | \beta | \gamma$. The $\alpha$ describes the machine environment. This paper uses $\alpha \in \{1,P,R\}$, indicating whether there are one ($1$), identical ($P$) or unrelated ($R$) parallel machines available. Here identical refers to the fact that every job takes a fixed amount of time process independent of the machine, and unrelated means a job could take different time to process per machine. The $\beta$ field describes the job characteristics, which in this paper can be a combination of the following values: $prec$ (precedence constraints), $r_j$ (release dates), $d_j$ (deadlines) and $p_j =1$ (all processing times are $1$). We assume without loss of generality that all release dates and deadlines are integers. 

The $\gamma$ field concerns the optimization criteria. A given schedule determines $C_j$, the completion time of job $j$, and $U_j$, the unit penalty which is $1$ if $C_j > d_j$, and $0$ if $C_j \le d_j$. In this paper we use the following optimization criteria

\begin{itemize}
    \item $C_{\max}$: minimize the makespan (i.e. the maximum completion time $C_j$ of any job),
    \item $\sum_jU_j$: minimize the number of jobs that finish after their deadline,
    \item $k\text{-sched}$: maximize the number of processed jobs; in particular, process at least $k$ jobs. 
\end{itemize}

A schedule is said to be \emph{feasible} if no constraints (deadlines, release dates, precedence constraints) are violated.

\section{Result Type C: Precedence Constraints, Release Dates and Unit Processing Times} \label{Chapter:DP}
In this section we provide a fast algorithm for partial scheduling with release dates and unit processing times parameterized by the number $k$ of scheduled jobs (Theorem~\ref{thm:main}). There exists a simple, but slow, algorithm with runtime $\mathcal{O}^*(2^{k^2})$ that already proves that this problem is $\mathsf{FPT}$ in $k$: This algorithm branches $k$ times on jobs that can be processed next. If more than $k$ jobs are available at a step, then processing these jobs greedily is optimal. Otherwise, we can recursively try to schedule all non-empty subsets of jobs to schedule next, and a $\mathcal{O}^*(2^{k^2})$ time algorithm is obtained via a standard (bounded search-tree) analysis. To improve on this algorithm, we present a dynamic programming algorithm based on table entries indexed by antichains in the precedence graph $G$ describing the precedence relations. Such an antichain describes the maximal jobs already scheduled in a partial schedule. 
Our key idea is that, to find an optimal solution, it is sufficient to restrict our attention to a subset of all antichains.
This subset will be defined in terms of the \emph{depth} of an antichain. With this algorithm we improve the runtime to $\mathcal{O}(8^kk(|V|+|E|))$.

By binary search, we can restrict attention to a variant of the problem that asks whether there is a feasible schedule with makespan at most $C_{\max}$, for a fixed universal deadline $C_{\max}$.


\subparagraph*{Notation for Posets.}
Any precedence graph $G$ is a directed acyclic graph and therefore induces a partial order $\prec$ on $V(G)$. Indeed, if there is a path from $x$ to $y$, we let $x \preceq y$.
An \emph{antichain} is a set $A \subseteq V(G)$ of mutually incomparable elements. We say $A$ is \emph{maximal} if there is no antichain $A'$ with $A \subset A'$.
The set of \emph{predecessors} of $A$ is $\pred(A) = \{x \in V(G): \exists a\in A: x \preceq a \}$, and the the set of \emph{comparables} of $A$ is $\comp(A) = \{x \in V(G): \exists a\in A:  x \preceq a \text{ or } x \succeq a\}$.
Note $\comp(A)=V(G)$ if and only if $A$ is maximal.

An element $x\in V(G)$ is a \emph{minimal} element if $x \preceq y$ for all $y \in \comp(\{x\})$. An element $x\in V(G)$ is a \emph{maximal} element if $x\succeq y$ for all $y \in \comp(\{x\})$. Furthermore $\min(G) =\break\{x\mid x\text{ is a minimal element in } G\}$ and $\max(G) = \{x\mid x\text{ is a maximal element in } G\}$.

Notice that $\max(G)$ is exactly the antichain $A$ such that $\pred(A) = V(G)$.  We denote the subgraph of $G$ induced by $S$ with $G[S]$. We may assume that $r_j < r_{j'}$ if $j\prec j'$ since job $j'$ will be processed later than $r_j$ in any schedule.
To handle release dates we use the following:

\begin{definition}
Let $G$ be a precedence graph. Then $G^t$ is the precedence graph restricted to all jobs that can be scheduled on or before time $t$, i.e. all jobs with release date at most $t$. 
\end{definition}
We assume $G = G^{C_{\max}}$, since all jobs with release date greater than~$C_{\max}$ can be ignored. 
\subparagraph*{The Algorithm.}
We now introduce our dynamic programming algorithm for $P|r_j,prec,p_j=1|k\text{-sched}, C_{\max}$. Let $m$ be the number of machines available.
We start with defining the table entries. For a given antichain $A \subseteq V(G)$ and integer $t$ we define
$$S(A,t) =  \begin{cases}1, & \text{ if there exists a feasible schedule of makespan } t \text{ that processes } \pred(A)  ,\\
 0, & \text{ otherwise.}
\end{cases}$$
Computing the values of $S(A,t)$ can be done by trying all combinations of scheduling at most $m$ jobs of $A$ at time $t$ and then checking whether all remaining jobs of $\pred(A)$ can be scheduled in makespan $t-1$.
To do so, we also verify that all the jobs in $A$ actually have a release date at or before $t$.
Formally, we have the following recurrence for $S(A,t)$:

\begin{lemma}\label{lemmaS}
$$S(A,t) = (A \subseteq V(G^{t})) \wedge \bigvee_{X\subseteq A: |X| \le m}S(A',t-1) : A' = \max(\pred(A)\setminus X).$$
\end{lemma}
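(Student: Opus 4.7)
The plan is to prove the recurrence by showing that both directions of the biconditional hold, with the main technical lemma being that $\pred(A)\setminus X$ is downward closed whenever $X\subseteq A$.

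\textbf{Forward direction.} Assume $S(A,t)=1$ and let $\sigma$ be a feasible schedule of makespan at most $t$ processing exactly $\pred(A)$. First, every $a\in A$ belongs to $\pred(A)$ and hence is scheduled by $\sigma$, so its release date is at most $t$ and $A\subseteq V(G^t)$. Let $X$ be the set of jobs that $\sigma$ completes at time $t$; since $\sigma$ has makespan at most $t$ and is feasible with respect to the precedence constraints, any job completed at the final slot must be maximal in $\pred(A)$, and as argued in the preliminaries the set of maximal elements of $\pred(A)$ is exactly $A$. Therefore $X\subseteq A$, and since at most $m$ jobs are processed in the slot ending at $t$ we have $|X|\le m$. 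Removing $X$ from $\sigma$ gives a feasible schedule of makespan $t-1$ whose processed set is $\pred(A)\setminus X$.

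\textbf{Key property.} I will observe that $\pred(A)\setminus X$ is downward closed in $\prec$. Indeed, let $y\in\pred(A)\setminus X$ and $z\preceq y$. Then $z\in\pred(A)$ because $\pred(A)$ is downward closed. If $z$ were in $X\subseteq A$ then $z$ would be a maximal element of $\pred(A)$ with $z\preceq y\in\pred(A)$, forcing $y=z\in X$, a contradiction. Hence $z\in\pred(A)\setminus X$. Setting $A':=\max(\pred(A)\setminus X)$, downward closure gives $\pred(A')=\pred(A)\setminus X$. The sub-schedule obtained above therefore witnesses $S(A',t-1)=1$, which establishes the forward direction.

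\textbf{Backward direction.} Assume $A\subseteq V(G^t)$ and pick $X\subseteq A$ with $|X|\le m$ such that $S(A',t-1)=1$ for $A'=\max(\pred(A)\setminus X)$. Take a feasible schedule $\sigma'$ of makespan $t-1$ processing $\pred(A')=\pred(A)\setminus X$, and extend it by scheduling the jobs of $X$ in the slot ending at time $t$. We need to check feasibility: (i) release dates are respected because $X\subseteq A\subseteq V(G^t)$; (ii) machine capacity is respected because $|X|\le m$; (iii) precedence is respected because each $x\in X\subseteq A$ is maximal in $\pred(A)$, so all of its predecessors lie in $\pred(A)\setminus X=\pred(A')$ and are completed by time $t-1$ under $\sigma'$. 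The processed set is $\pred(A')\cup X=(\pred(A)\setminus X)\cup X=\pred(A)$, and the makespan is $t$, so $S(A,t)=1$.

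The only delicate step is the downward-closure argument that justifies passing from $\pred(A)\setminus X$ to its antichain of maxima $A'$ without losing information; everything else is a direct bookkeeping of the schedule restricted to the last time slot. I expect no other obstacles.
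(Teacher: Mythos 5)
Your proof is correct and follows essentially the same route as the paper's: the backward direction extends a schedule for $\pred(A')$ by placing $X$ in the final slot, and the forward direction observes that only maximal elements of $\pred(A)$ (i.e.\ elements of $A$) can occupy the last slot. The only difference is that you explicitly verify the downward-closure of $\pred(A)\setminus X$, which justifies the identity $\pred(A')=\pred(A)\setminus X$ that the paper simply asserts; this is a welcome but minor addition.
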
 

\begin{proof}
If $A\not\subseteq V(G^t)$, then there is a job $j \in A$ with $r_j > t$. And thus $S(A,t)=0$.

For any $X \subseteq A$, $X$ is a set of maximal elements with respect to $G[\pred(A)]$, and consists of pair-wise incomparable jobs, since $A$ is an antichain. So, we can schedule all jobs from $X$ at time $t$ without violating any precedence constraints. Define $ A' = \max(\pred(A)\setminus X)$ as the unique antichain such that $\pred(A)\setminus X = \pred(A')$. If $S(A',t-1)=1$ and $|X|\le m$, we can extend the schedule of $S(A',t-1)$ by scheduling all $X$ at time $t$. In this way we get a feasible schedule processing all jobs of $\pred(A)$ before or at time $t$. So if we find such an $X$ with $|X|\le m$ and $S(A',t-1)=1$, we must have $S(A,t)=1$. 

For the other direction, if for all $X\subseteq A$ with $|X|\le m$, $S(A',t-1)=0$, then no matter which set $X \subseteq A$ we try to schedule at time $t$, the remaining jobs cannot be scheduled before~$t$. Note that only jobs from $A$ can be scheduled at time $t$, since those are the maximal jobs. Hence, there is no feasible schedule and $S(A,t) = 0$.
\end{proof}

The above recurrence cannot be directly evaluated, since the number of different antichains of a graph can be big: there can be as many as $\binom{n}{k}$ different antichains with $|\pred(A)|\le k$, for example in the extreme case of an independent set. Even when we restrict our precedence graph to have out degree $k$, there could be $k^k$ different antichains, for example in $k$-ary trees. To circumvent this issue, we restrict our dynamic programming algorithm only to a specific subset of antichains. To do this, we use the following new notion of the \emph{depth} of an antichain.

\begin{definition}
Let $A$ be an antichain. Define the depth (with respect to $t$) of $A$ as
$$d^t(A) = |\pred(A)| + |\min(G^t- \comp(A))|.$$
We also denote $d(A) = d^{C_{\max}}(A)$.
\end{definition}

\begin{figure}
    \centering
    \begin{tikzpicture} [scale = 0.65, transform shape]

\tikzset{vertex/.style = {shape=circle,draw,minimum size=1.5em}}
\tikzset{edge/.style = {->,> = latex'}}
\node[vertex, fill = red!90!black] (a) at  (0,0) {};
\node[vertex, fill = black] (b) at  (4,2) {};
\node[vertex, fill = cyan!60!white, very thick] (c) at  (4,0) {};
\node[vertex, fill = cyan!60!white, very thick] (d) at  (4,-2) {};

\node[vertex] (e) at  (8,4) {};
\node[vertex] (f) at  (8,3) {};
\node[vertex] (g) at  (8,2) {};

\node[vertex, fill = cyan!60!white] (h) at  (8,1) {};
\node[vertex, fill = cyan!60!white] (i) at  (8,0) {};
\node[vertex, fill = cyan!60!white] (j) at  (8,-1) {};

\node[vertex, fill = cyan!60!white] (k) at  (8,-2) {};
\node[vertex, fill = cyan!60!white] (l) at  (8,-3) {};
\node[vertex, fill = cyan!60!white] (m) at  (8,-4) {};

\draw[edge] (a) to (b);
\draw[edge] (a) to (c);
\draw[edge] (a) to (d);
\draw[edge] (b) to (e);
\draw[edge] (b) to (f);
\draw[edge] (b) to (g);
\draw[edge] (c) to (h);
\draw[edge] (c) to (i);
\draw[edge] (c) to (j);
\draw[edge] (d) to (k);
\draw[edge] (d) to (l);
\draw[edge] (d) to (m);

\node[vertex, fill = black] (e) at  (12,1) {};
\node[anchor=west] at (12.5,1) {\Large = job in antichain $A$};
\node[vertex, fill = red!90!black] (f) at  (12,0) {};
\node[anchor=west] at (12.5,0) {\Large = job in $\pred(A)$};
\node[vertex, fill = cyan!60!white, very thick] (g) at  (12,-1) {};
\node[anchor=west] at (12.5,-1) {\Large = job in $\min(G- \comp(A))$};
\node[vertex, fill = cyan!60!white] (g) at  (12,-2) {};
\node[anchor=west] at (12.5,-2) {\Large = job in $G- \comp(A)$};
\node[vertex] (h) at  (12,-3) {};
\node[anchor=west] at (12.5,-3) {\Large = job in $\comp(A)$};

\node[] at (8,-5) {\Large $d(A) = |\pred(A)| + |\min(G- \comp(A))| = 2 + 2$};

\end{tikzpicture}
    \caption{Example of an antichain and its depth in a perfect $3$-ary tree. We see that $|\pred(A)|=2$, but $d(A)=4$. If $k=2$, the dynamic programming algorithm will not compute $S(A,t)$ since $d(A)>k$. The only antichains with depth $\le 2$ are the empty set and the root node $r$ on its own as a set. Indeed $d(\emptyset) = d(\{r\}) =1$. Note that for instances with $k=2$, a feasible schedule may exist. If so, we will find that $R(\{r\},1)=1$, which will be defined later. In this way, we can still find the antichain $A$ as a solution.}

    \label{k-tree}
\end{figure}
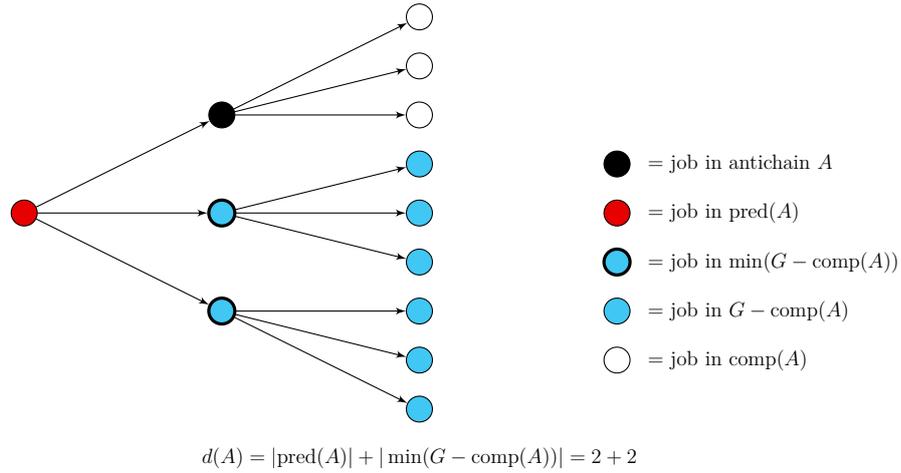

The intuition behind this definition is that it quantifies the number of jobs that can be scheduled before (and including) $A$ without violating precedence constraints. See Figure~\ref{k-tree} for an example of an antichain and its depth. We restrict the dynamic programming algorithm to only compute $S(A,t)$ for $A$ satisfying $d^t(A) \le k$. This ensures that we do not go `too deep' into the precedence graph unnecessarily at the cost of a slow runtime.

Because of this restriction in the depth, it could happen that we check no antichains with $k$ or more predecessors, while there are corresponding feasible schedules. It is therefore possible that for some antichains $A$ with $d^t(A)>k$, there is a feasible schedule for all $\ge k$ jobs in $\pred(A)$ before time $C_{\max}$, but the value $S(A,C_{\max})$ will not be computed. To make sure we still find an optimal schedule, we also compute the following condition $R(A,t)$ for all~$t\le C_{\max}$ and antichains $A$ with $d^t(A)\le k$:
$$R(A,t) = \begin{cases} 
1, & \text{if there exists a feasible schedule  with makespan at most } C_{\max} \text{ that}\\
&  \text{processes }\pred(A) \text{ on or before $t$ and processes jobs from }\\
& \min(G - \pred(A)) \text{ after $t$, with a total of } k \text{ jobs processed},\\
0, & \text{otherwise.}
\end{cases}$$

By definition of $R(A,t)$, if $R(A,t)=1$ for any $A$ and $t\le C_{\max}$, then we find a feasible schedule that processes $k$ jobs on time.\footnote{The reverse direction is more difficult and postponed to Lemma~\ref{lem:corr}.}
We show in Appendix~\ref{sec:omittedproofs3} that $R(A,t)$ can be quickly computed: 

\begin{lemma}[$\dagger$]\label{lemmaR}
There is an $\mathcal{O}(|V|k + |E|)$ time algorithm $\mathtt{fill(A,t)}$ that, given an antichain $A$, integer $t$, and value $S(A,t)$, computes $R(A,t)$.
\end{lemma}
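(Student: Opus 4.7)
The plan is to reduce computing $R(A,t)$ to a classical max-throughput scheduling subproblem, solvable greedily. First, $R(A,t) = 1$ requires $S(A,t) = 1$, since any witnessing schedule must process all of $\pred(A)$ on or before $t$; so if $S(A,t) = 0$ we return $0$ immediately. Otherwise, compute $M := \min(G - \pred(A))$ in $\mathcal{O}(|V|+|E|)$ time: mark $\pred(A)$ by a reverse-BFS in the DAG from $A$, then scan each $v \notin \pred(A)$ and include $v$ in $M$ exactly when all DAG in-neighbors of $v$ lie in $\pred(A)$. This correctly yields $\min(G - \pred(A))$ because $\pred(A)$ is a down-set of the partial order.

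The key structural observation is that $M$ is itself an antichain in $G$: two comparable elements of $M$ would remain comparable in the sub-poset $G - \pred(A)$, contradicting minimality there. Hence the $k' := k - |\pred(A)|$ jobs we wish to schedule after time $t$ carry no mutual precedence constraints; moreover, every predecessor of such a job lies in $\pred(A)$ and is processed by time $t$ in the witness schedule for $S(A,t)=1$, so placing them in any slot $> t$ respects all precedence constraints. Therefore $R(A,t) = 1$ iff at least $k'$ jobs from $M$ can be packed into the slots $\{t+1,\ldots,C_{\max}\}$ on $m$ machines while respecting release dates.

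This remaining subproblem is the classical max-throughput problem for unit jobs with release dates and a common deadline on $m$ identical machines, which is solved by the greedy rule ``process jobs in nondecreasing order of release date, assigning each to the earliest free slot''; optimality follows from a standard exchange argument. For the implementation, use linear-time selection on $M$ (keyed by release date) to identify the $k'$ jobs in $M$ with smallest release dates in $\mathcal{O}(|V|)$ time, then sort these and simulate the greedy in $\mathcal{O}(k \log k)$ additional time, returning $1$ iff at least $k'$ of them fit before $C_{\max}$. Combined with the $\mathcal{O}(|V|+|E|)$ cost of computing $M$, the total runtime sits comfortably within the claimed $\mathcal{O}(|V|k+|E|)$ bound.

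The only non-routine insight is the structural one: the definition of $R(A,t)$ restricts post-$t$ jobs to $\min(G-\pred(A))$, and it is precisely this restriction that decouples all residual precedence constraints and reduces the extension problem to a precedence-free throughput problem. Once this observation is in place, everything else is a direct application of a well-known greedy, and no delicate data structures are required to meet the runtime target.
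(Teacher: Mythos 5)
Your proposal is correct and follows essentially the same route as the paper's proof: compute $\min(G-\pred(A))$ in linear time, observe that these jobs form an antichain whose predecessors all lie in $\pred(A)$, and decide feasibility by greedily packing the jobs with smallest release dates into the slots after $t$, justified by the same exchange argument the paper uses. The only differences are cosmetic (explicit selection of the $k-|\pred(A)|$ smallest release dates versus greedily scheduling all of $\min(G-\pred(A))$, and sorting in $\mathcal{O}(k\log k)$ versus the paper's $\mathcal{O}(|V|k)$ bucket-style bound), both of which stay within the claimed runtime.
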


The algorithm $\mathtt{fill(A,t)}$ checks if $S(A,t)=1$ and if so, greedily schedules jobs from $\min(G-\pred(A))$ after $t$ in order of smallest release date. If $k - |\pred(A)|$ jobs can be scheduled before $C_{\max}$, it returns `true' ($R(A,t)=1$). Otherwise, it returns `false' ($R(A,t)=0)$. 

Combining all steps gives us the algorithm as described in Algorithm~\ref{algorithmDP}. It remains to bound its runtime and argue its correctness.

 \begin{algorithm}
\SetAlgoLined
\ForEach{$t=1,...,C_{\max}$}{
    Enumerate all antichains $A$ in $G^t$ with $d^t(A) \le k$ using Lemma~\ref{cor:nrantichainsfort}\\
    \ForEach{antichain $A$ in $G^t$ with $d^t(A) \le k$}{
        Compute $S(A,t)$ using Lemma~\ref{lemmaS}\\
        \If{$\mathtt{fill}(S(A,t),A,t)$}{
            \Return{TRUE}
        }
    }
}
\Return{FALSE}
\caption{Algorithm for $P|\pred,p_j=1|k\text{-sched},C_{\max}$}
\label{algorithmDP}
\end{algorithm}

\subparagraph*{Runtime.}
To analyze the runtime of the dynamic programming algorithm, we need to bound the number of checked antichains. Recall that we only check antichains $A$ with $d^t(A)\le k$ for each time $t\le C_{\max}$. We first analyze the number of antichains $A$ with $d(A)\le k$ in any graph and use this to upper bound the number of antichains  checked at time $t$.

\begin{lemma}[$\dagger$] \label{cor:nrantichainsfort}
	For any $t$, there are at most $4^k$ antichains $A$ with $d^t(A)\le k$ in any precedence graph $G=(V,E)$, and they can be enumerated within $\mathcal{O}(4^k(|V|+|E|))$ time.
\end{lemma}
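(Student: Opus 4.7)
The plan is to bound the number of depth-$\le k$ antichains by $4^k$, and then to derive an enumeration algorithm of matching running time. The main step is to extend every such antichain to a canonical \emph{maximal} antichain whose predecessor set has size $\le k$, bound the number of such maximal antichains by $2^k$, and then observe that each has $\le 2^k$ subsets.

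The reduction: for any antichain $A$ of $G^t$, I would define its \emph{extension} $M(A) := A \cup \min(G^t - \comp(A))$ and show that $M(A)$ is a maximal antichain of $G^t$ with $|\pred(M(A))| = d^t(A)$. It is an antichain since the added elements are incomparable to $A$ (by definition of $\comp$) and mutually incomparable (being minimal in a common set); it is maximal because any $x \in V(G^t) - M(A)$ is either comparable to some element of $A$, or lies in $G^t - \comp(A)$ and therefore sits $\succeq$ some element of $\min(G^t - \comp(A)) \subseteq M(A)$. A short check shows the union $\pred(A) \cup \min(G^t - \comp(A))$ is disjoint and equals $\pred(M(A))$ (any predecessor $z$ of some $m \in \min(G^t - \comp(A))$ that is not $m$ itself must be in $\comp(A)$ by minimality of $m$, and must lie $\preceq$ rather than $\succeq$ an element of $A$ to avoid $a \prec z \prec m$ violating $m \not\in\comp(A)$), so $|\pred(M(A))| = d^t(A) \le k$ and in particular $|M(A)| \le k$.

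Next I would bound the number $m_k(G^t)$ of maximal antichains $M$ with $|\pred(M)| \le k$ by $2^k$ via induction on $k$. The base case is trivial; for the inductive step, pick any minimal vertex $v$ of $G^t$ and split on whether $v \in M$. Maximal antichains $M \ni v$ biject with maximal antichains of $G' := G^t - \comp(\{v\})$ of $|\pred| \le k-1$ via $M \mapsto M \setminus \{v\}$ (the inverse works because $v$ is incomparable to $V(G')$ and every vertex of $V(G^t) \setminus V(G')$ lies $\succeq v$, yielding both antichainhood and maximality of $M' \cup \{v\}$). Maximal antichains $M \not\ni v$ must satisfy $v \in \pred(M)$ (else $v$ would be incomparable to $M$, contradicting maximality in $G^t$), and each such $M$ is still a maximal antichain of $G'' := G^t - \{v\}$ with $|\pred_{G''}(M)| = |\pred(M)| - 1$. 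Hence $m_k(G^t) \le m_{k-1}(G') + m_{k-1}(G'') \le 2 \cdot 2^{k-1} = 2^k$.

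Combining, each antichain $A$ with $d^t(A) \le k$ is uniquely determined by the pair $(M(A), A)$ with $M(A)$ one of $\le 2^k$ maximal antichains of $|\pred| \le k$ and $A$ one of $\le 2^{|M(A)|} \le 2^k$ subsets of $M(A)$, giving the count $\le 4^k$. For the enumeration, I would execute the maximal-antichain recursion explicitly (using the check $v \in \pred(M'')$ in the $v \notin M$ branch to filter out spurious $M''$); this makes $O(2^k)$ recursive calls of $O(|V|+|E|)$ work each. Then, for every enumerated $M$, iterate over all $\le 2^k$ subsets $A$, compute $M(A)$ in $O(|V|+|E|)$ time, and output $A$ exactly when $M(A) = M$, yielding each antichain once. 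The total running time is $O(4^k(|V|+|E|))$. The most delicate pieces will be verifying the disjointness of the union yielding $|\pred(M(A))| = d^t(A)$ and correctly handling the asymmetry in the $v \notin M$ branch of the max-antichain recursion.
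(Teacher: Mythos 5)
Your proposal is correct and follows essentially the same route as the paper: extend each antichain $A$ to the canonical maximal antichain $A\cup\min(G^t-\comp(A))$ of equal depth, bound the number of maximal antichains of depth at most $k$ by $2^k$ via induction on minimal elements, and multiply by $2^k$ for subsets. Your binary recurrence $m_k\le m_{k-1}(G-\comp(\{v\}))+m_{k-1}(G-\{v\})$ is just an unrolled form of the paper's $(l+1)$-way partition over which minimal element is the first one missing from the antichain, so the two arguments are the same in substance.
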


Notice that to compute each $S(A,t)$, we look at a maximum of $\binom{k}{m} \le 2^k$ different sets~$X$. Computing the antichain $A'$ such that $A' = \max(\pred(A)\setminus X)$ takes $\mathcal{O}(|V|+|E|)$ \break time. After this computation, $R(A,t)$ is directly computed in $\mathcal{O}(|V|k + |E|)$ time. For each time~$t~\in~\{1,...,C_{\max}\}$, there are at most $4^k$ different antichains $A$ for which we compute~$S(A,t)$ and $R(A,t)$. Since $C_{\max}\leq k$, we therefore have total runtime of $\mathcal{O}(4^kk(2^k(|V|+|E|)+(|V|k+|E|)))$.
Hence, Algorithm~\ref{algorithmDP} runs in time $\mathcal{O}(8^kk(|V|+|E|))$.

\subparagraph*{Correctness of algorithm.}
To show that the algorithm described in Algorithm~\ref{algorithmDP} indeed returns the correct answer, the following lemma is clearly sufficient:
\begin{lemma}[$\dagger$]\label{lem:corr} A feasible schedule for $k$ jobs with makespan at most $C_{\max}$ exists if and only if $R(A,t)=1$ for some $t\le C_{\max}$ and antichain $A$ with $d^t(A)\le k$.
\end{lemma}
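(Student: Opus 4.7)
The proof separates into the two directions of the biconditional. The backward direction $(\Leftarrow)$ is immediate from the definition of $R(A,t)$: a witnessing schedule processes $k$ jobs with makespan at most $C_{\max}$, which is exactly the feasible schedule required.

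For the forward direction $(\Rightarrow)$, suppose $\sigma$ is a feasible schedule of $k$ jobs with makespan at most $C_{\max}$. My plan is to extract from $\sigma$ a time $t^*\le C_{\max}$ and antichain $A$ meeting both required conditions. For each $t$, let $\sigma^{\le t}$ denote the set of jobs processed by $\sigma$ on or before $t$, and set $A_t := \max(\sigma^{\le t})$. Feasibility of $\sigma$ (which forces $\sigma$ to be downward-closed under $\prec$), combined with the monotonicity $r_j<r_{j'}$ whenever $j\prec j'$, yields $\pred(A_t)=\sigma^{\le t}$; moreover, each element of $\min(G^t-\comp(A_t))$ has all of its $G$-predecessors in $\sigma^{\le t}$, so is a ready candidate for scheduling at time $t+1$. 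I then take $t^*$ to be the last scheduling time in $\sigma$ of a job that possesses a $\prec$-successor also scheduled by $\sigma$ (with $t^*:=0$ if no such pair exists), and set $A:=A_{t^*}$. With this choice, $\sigma^{>t^*}$ is a $\prec$-antichain whose elements all lie in $\min(G-\pred(A))$, and so $\sigma$ itself directly witnesses $R(A,t^*)=1$.

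The remaining task, and the main technical obstacle, is to verify the depth bound $d^{t^*}(A)=|\pred(A)|+|\min(G^{t^*}-\comp(A))|\le k$, equivalently $|\min(G^{t^*}-\comp(A))|\le k-|\sigma^{\le t^*}|$. The plan is an exchange argument: one may assume without loss of generality that $\sigma$ is canonical in the sense that no machine is left idle at any time $t\le t^*$ while some job in $\min(G^t-\comp(A_t))$ is available, by iteratively swapping a scheduled non-minimal job for a ready minimal alternative whenever such a swap preserves feasibility and makespan. Under such a canonical $\sigma$, every element of $\min(G^{t^*}-\comp(A))$ must be accounted for either by being scheduled in $\sigma^{>t^*}$ (consuming part of the remaining budget $k-|\sigma^{\le t^*}|$) or by an explicit machine-capacity obstruction over $(t^*,C_{\max}]$; a careful accounting of the machine-time units over this window then yields the required bound. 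Carrying out this exchange without disrupting the antichain structure of $\sigma^{>t^*}$ nor shifting the choice of $t^*$ is the delicate part of the argument, and I expect the cleanest route to be to maintain these invariants as loop invariants of the exchange procedure.
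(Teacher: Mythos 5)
Your backward direction and your choice of the cut time $t^*$ and antichain $A=A_{t^*}$ coincide with the paper's proof (which calls them $t$ and $B$), and your verification that $\sigma^{>t^*}$ consists of pairwise incomparable jobs lying in $\min(G-\pred(A))$ is exactly the paper's first case. The gap is precisely where you locate it --- the depth bound --- but the plan you sketch for closing it cannot work. The term $|\min(G^{t^*}-\comp(A))|$ depends only on the antichain $A$ and on $G^{t^*}$, not on what $\sigma$ does after $t^*$; hence no exchange of which jobs are scheduled, and no accounting of machine capacity over $(t^*,C_{\max}]$, can bound it by $k-|\pred(A)|$. Concretely: one machine, a two-job chain $u_1\prec u_2$ together with $n$ pairwise incomparable jobs, all released at time $0$, $k=2$, $C_{\max}=2$, and $\sigma$ processing $u_1$ then $u_2$. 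Then $t^*=1$, $A=\{u_1\}$ and $d^1(A)=1+n$, while only one machine-slot remains after $t^*$; the $n$ untouched minimal jobs are neither scheduled nor blocked by any capacity obstruction that a charging argument could exploit, and canonicalizing $\sigma$ changes nothing because the offending quantity is already determined once $A$ is.

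The paper's resolution is an extremal argument that your proposal is missing: it does not attempt to prove the depth bound for the given schedule, but instead chooses $\sigma^*$ minimizing $d(A(\sigma))$ over \emph{all} feasible $k$-job schedules (where $A(\sigma)$ is the antichain of maximal processed jobs), forms $t$ and $B$ exactly as you do, and splits into two cases. If $d^t(B)\le k$, it concludes $R(B,t)=1$ with $\sigma^*$ as witness, as you would. If $d^t(B)>k$, it constructs a new feasible schedule $\sigma'$ by truncating $\sigma^*$ at time $t$ and appending $k-|\pred(B)|$ jobs of $\min(G^t-\comp(B))$, and then shows $D(A(\sigma'))=D(B)\subseteq D(A(\sigma^*))$ with strict containment, where $D(A)=\pred(A)\cup\min(G-\comp(A))$; the strictness witness is a job $y$ scheduled after $t$ that succeeds some $x\in B$ scheduled at time $t$ (such a pair exists by the choice of $t$), since $y\in D(A(\sigma^*))\setminus D(B)$. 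This contradicts the minimality of $d(A(\sigma^*))$. In particular, the witnessing pair ultimately comes from a possibly different schedule than the one you started from; any proof that insists on certifying the original schedule's own antichain fails on examples like the one above.
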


To prove Lemma~\ref{lem:corr}, we consider the schedule which corresponds to an antichain which has minimal depth. We then conclude that it either should be witnessed by some $R(A,t)$ or that there is another antichain with even smaller depth, which contradicts the assumption. The proof heavily relies on the intricacies of the definition of depth.


\section{Result Types B and D: One Machine and Precedence Constraints}\label{sec:lbprec}
In this section we show that Algorithm \ref{algorithmDP} cannot be even slightly generalized further: if we allow job-dependent deadlines or non-unit processing times, the problem becomes $\mathsf{W}[1]$-hard parameterized by $k$ and cannot be solved in $n^{o(k / \log k)}$ time unless the ETH fails.

\subparagraph*{Job-dependent deadlines.}
The fact that combining precedence constraints with job-\break dependent deadlines makes the problem $\mathsf{W}[1]$-hard, is a direct consequence from the fact that $1| \textit{prec}, p_j=1|\sum_jU_j$ is $\mathsf{W}[1]$-hard, parameterized by $n-\sum_jU_j = k$ where $n$ is the number of jobs \cite{fellows2003parametric}. It is important to notice that the notation of these problems implies that each job can have its own deadline. Hence, we conclude from this that $1| d_j,\textit{prec}, p_j=1 |k\text{-sched},C_{\max}$ is $\mathsf{W}[1]$-hard parameterized by $k$. This is a reduction from $k$-{\sc Clique} and therefore we get a lower bound on algorithms for the problem of $n^{\Omega(\sqrt{k})}$. Based on the Exponential Time Hypothesis, we now sharpen this lower bound with a reduction from $3${\sc-Coloring}:

\begin{theorem}\label{lem:ethbound} $1| d_j,\textit{prec}, p_j=1 |k\text{-sched},C_{\max}$ is $\mathsf{W}[1]$-hard parameterized by $k$. Furthermore, there is no algorithm solving $1| d_j,\textit{prec}, p_j=1 |k\text{-sched},C_{\max}$ in $2^{o(n)}$ time where $n$ is the number of jobs, assuming ETH.
\end{theorem}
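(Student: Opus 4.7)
The $\mathsf{W}[1]$-hardness follows essentially for free from the quoted result of Fellows et al.\ that $1|\textit{prec},p_j=1|\sum_j U_j$ is $\mathsf{W}[1]$-hard parameterized by $k := n-\sum_j U_j$: keeping the same jobs, precedence arcs, and per-job deadlines $d_j$ and setting a global horizon $C_{\max}=n$ gives an equivalent $k$-sched instance, since a schedule of the original instance with at least $k$ jobs on time is precisely a schedule of the new instance that picks $k$ jobs and meets their deadlines and precedences. The novel content is the $2^{o(n)}$ ETH lower bound, for which I plan a linear-size reduction from $3$-\textsc{Coloring}.

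Given a $3$-\textsc{Coloring} instance $G=(V,E)$ with $|V|=N$ vertices and $|E|=M$ edges, which by the sparsification lemma I may assume satisfies $M=\mathcal{O}(N)$, I will construct a partial scheduling instance with $n=\mathcal{O}(N+M)=\mathcal{O}(N)$ unit jobs, a precedence DAG, per-job deadlines, a global horizon $C_{\max}$ and a target $k=\mathcal{O}(N)$, such that $G$ is $3$-colorable iff there is a feasible schedule processing at least $k$ jobs. Since ETH forbids $2^{o(N)}$-time algorithms for $3$-\textsc{Coloring} on sparse graphs, any $2^{o(n)}$-time algorithm for the scheduling problem would contradict ETH.

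\textbf{Gadgets.} For each vertex $v\in V$ I will use a constant-size \emph{selection gadget} containing three "color jobs" $v_1,v_2,v_3$ together with short precedence chains and deadlines chosen so that, in any schedule achieving the target $k$, exactly one of $v_1,v_2,v_3$ is scheduled on time; this encodes the color of $v$. For each edge $uv\in E$ I will use a constant-size \emph{edge gadget} whose auxiliary jobs can be scheduled on time only when the selected color jobs of $u$ and $v$ differ, implemented via precedence arcs from the color jobs into the gadget together with deadlines tight enough to preclude on-time scheduling of the gadget when the two chosen colors coincide. The target $k$ is set to the sum of the per-vertex and per-edge contributions so that reaching $k$ forces every vertex to be assigned exactly one color and every edge gadget to be fully scheduled.

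\textbf{Correctness and main obstacle.} The forward direction turns a proper coloring $c\colon V\to\{1,2,3\}$ into a schedule by processing $v_{c(v)}$ (and its mandated chain) for every vertex $v$ and then all edge-gadget jobs in an order consistent with the deadline pattern -- possible precisely because $c(u)\neq c(v)$ on each edge. The reverse direction argues from the selection-gadget deadlines that any schedule of size $k$ must include exactly one color-job per vertex and must fully schedule every edge gadget, whence the implied coloring is proper. The main technical obstacle is the design of the selection gadget under severe restrictions: on a single machine with unit processing times and no release dates, the only "resource" is the linear sequence of time slots, so the $3$-way per-vertex choice must be encoded purely in the deadline pattern, and the deadlines of the selection and edge gadgets must interleave so that no "cheating" schedule -- e.g.\ one that picks two color-jobs for the same vertex and skips some other mandatory job -- can still reach $k$. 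Keeping each gadget of $\mathcal{O}(1)$ size is what yields $n=\mathcal{O}(N+M)$ and therefore the tight $2^{o(n)}$ ETH bound.
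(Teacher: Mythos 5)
Your high-level plan coincides with the paper's: the $\mathsf{W}[1]$-hardness is inherited from Fellows and McCartin, and the $2^{o(n)}$ bound comes from a linear-size reduction from $3$-\textsc{Coloring} with constant-size vertex-selection and edge gadgets. However, the proposal stops exactly at the point where the proof actually lives. You write that ``the main technical obstacle is the design of the selection gadget'' and that the deadlines must be ``chosen so that exactly one of $v_1,v_2,v_3$ is scheduled on time,'' but you never exhibit such a choice, and it is not at all obvious that one exists: with unit jobs on a single machine and no release dates, a deadline pattern alone can enforce ``at most one job from this group is scheduled'' (give all three the same deadline $i$, occupied by other mandatory jobs before time $i$), but it cannot by itself enforce ``exactly one,'' nor can it stop a schedule from skipping a vertex entirely and harvesting $k$ jobs elsewhere. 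A plan that defers this is a plan, not a proof.

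The paper's resolution, which is absent from your proposal, is a \emph{mirrored} gadget combined with a saturation argument. Each vertex $v_i$ contributes three early jobs $v_i^1,v_i^2,v_i^3$ all with deadline $i$ \emph{and} three late partner jobs $w_i^1,w_i^2,w_i^3$ all with deadline $2n'+2m'+1-i$, linked by $v_i^a\prec w_i^a$; edges are treated analogously with jobs $e_j^{ab}$ (only for $a\neq b$) and partners $f_j^{ab}$. Setting $k=C_{\max}=2n'+2m'$ forces every unit time slot to be occupied, and an induction from the outermost slots inward shows that slot $2n'+2m'-(i-1)$ can only hold a job from $\{w_i^1,w_i^2,w_i^3\}$, whose precedence predecessor must then sit in slot $i$; this simultaneously yields ``at least one'' (the late slot must be filled) and ``at most one'' (all three early jobs share deadline $i$). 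The ``exactly one'' semantics you need is thus not encoded in any single gadget's deadlines but emerges from the global pigeonhole on time slots, driven by the choice $k=C_{\max}$. Without this (or an equivalent mechanism), the reverse direction of your correctness argument cannot be carried out, so the proposal as written has a genuine gap.
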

\begin{proof}
	The proof will be a reduction from $3${\sc-Coloring}, for which no $2^{o(|V|+|E|)}$ algorithm exists under the Exponential Time Hypothesis \cite[pages 471-473]{cygan2015parameterized}. Let the graph $G=(V,E)$ be the instance of $3${\sc-Coloring} with $|V|=n'$ and $|E|=m'$. We then create the following instance for $1|d_j,prec,p_j=1|k\text{-sched},C_{\max}$. 
	\begin{itemize}
		\item For each vertex $v_i \in V$, create $6$ jobs:
		\begin{itemize}
			\item $v_i^1$, $v_i^2$  and $v_i^3$ with deadline $d_{v_i} = i$,
			\item $w_i^1$, $w_i^2$  and $w_i^3$ with deadline $d_{w_i} = n'+2m'+1-i$,
		\end{itemize}
		add precedence constraints $v_i^1 \prec w_i^1$, $v_i^2 \prec w_i^2$ and $v_i^3 \prec w_i^3$. These jobs represent which color for each vertex will be chosen (if $v_i^1$ and $w_i^1$ are processed, vertex $i$ gets color $1$).
		\item For each edge $e_j \in E$, create $12$ jobs:
		\begin{itemize}
			\item $e_j^{12}$, $e_j^{13}$, $e_j^{21}$, $e_j^{23}$, $e_j^{31}$ and $e_j^{32}$ with deadline $d_{e_j} =n'+j$,
			\item $f_j^{12}$, $f_j^{13}$, $f_j^{21}$, $f_j^{23}$, $f_j^{31}$ and $f_j^{32}$ with deadline $d_{f_j} =n'+m'+1-j$,
		\end{itemize}  
		add precedence constraints $e_j^{ab} \prec f_j^{ab}$. These jobs represent what the colors of the endpoints of an edge will be. So if the jobs $e_j^{ab}$ and $f_j^{ab}$ are processed for $e=\{u,v\}$, then vertex $u$ has color $a$ and vertex $v$ has color $b$. Since the endpoints should have different colors, the jobs $e_j^{aa}$ and $f_j^{aa}$ do not exist. 
		\item For each $e_j^{ab}$ with $e=\{u,v\}$ add the precedence constraints $u^a \prec e_j^{ab}$ and $v^b \prec e_j^{ab}$.
		\item Set $C_{\max} = k = 2n'+2m'$. 
	\end{itemize}
	
	We now prove that the created instance is a yes instance if and only if the original $3${\sc-Coloring} instance is a yes instance. 
	Assume that there is a $3$-coloring of the graph $G=(V,E)$. Then there is also a feasible schedule: For each vertex $v_i$ with color $a$, process the jobs $v_i^a$ and $w_i^a$ at their respective deadlines. For each edge $e_j=\{u,v\}$ with $u$ colored $a$ and $v$ colored $b$, process the jobs $e_j^{ab}$ and $f_j^{ab}$ exactly at their respective deadlines. Notice that because it is a $3$-coloring, each edge has endpoints of different colors, so these jobs exist. Also note that no two jobs were processed at the same time. Exactly $2n'+2m'$ jobs were processed before time $2n'+2m'$. Furthermore, no precedence constraints were violated. 
	
	For the other direction, assume that we have a feasible schedule in our created instance of $1| d_j,\textit{prec}, p_j=1 |k\text{-sched},C_{\max}$. Let $\mathcal{V}_i = \{v_i^1,v_i^2,v_i^3\}$,  $\mathcal{W}_i = \{w_i^1,w_i^2,w_i^3\}$,  and let $\mathcal{E}_j = \{e_j^{12},e_j^{13}, e_j^{21},e_j^{23},e_j^{31},e_j^{32}\}$ and $\mathcal{F}_j = \{f_j^{12},f_j^{13}, f_j^{21},f_j^{23},f_j^{31},f_j^{32}\}$. We show by induction on $i$ that out of each of the sets $\mathcal{V}_i$, $\mathcal{W}_i$, $\mathcal{E}_j$ and $\mathcal{F}_j$, exactly one job was scheduled at its deadline. 
	
	Since we have a feasible schedule, at time $2m'+2n'$ one of the jobs of~$\mathcal{W}_1$ must be scheduled, since they are the only jobs with a deadline greater than~$2n+2m-1$. However, if $w_1^a$ was scheduled at time $2m'+2n'$, then the job $v_1^a$ must be processed at time~$1$ because of precedence constraints and since its deadline is $1$. 
	Note, that no other jobs from $\mathcal{V}_1$ and $\mathcal{W}_1$ can be processed, due to their deadlines and precedence constraints.  
		
	Now assume that all sets $\mathcal{V}_1,...,\mathcal{V}_{i-1},\mathcal{W}_1,...,\mathcal{W}_{i-1}$ have exactly one job scheduled at their respective deadline, and no more can be processed. Since we have a feasible schedule, one job should be scheduled at time $2n'+2m'-(i-1)$. However, since no more jobs from~$\mathcal{W}_1,...,\mathcal{W}_{i-1}$ can be scheduled, the only possible jobs are from $\mathcal{W}_i$ since they are the only other jobs with a deadline greater than $2n'+2m'-i$. However, if $w_i^a$ was scheduled at time $2n'+2m'-(i-1)$, then the job $v_i^a$ must be processed at time $i$ because of precedence constraints, its deadline at $i$ and because at times $1,...,i-1$ other jobs had to be processed. Also, no other job from $\mathcal{V}_i$ can be processed in the schedule, since they all have deadline $i$. As a consequence, no other jobs from $\mathcal{W}_1$ can be processed, as they are restricted to precedence constraints. So the statement holds for all set $\mathcal{V}_i$ and $\mathcal{W}_i$. In the exact same way, one can conclude the same about all sets $\mathcal{E}_j$ and $\mathcal{F}_j$. 
	
	Because of this, we see that each job and each vertex have received a color from the schedule. They must form a $3$-coloring, because a job from $\mathcal{E}_j$ could only be processed if the two endpoints got two different colors. Hence the $3${\sc-Coloring} instance is a yes instance. 
	
	As $k = 2n'+2m'$ we therefore conclude there is no $2^{o(n)}$ algorithm under the ETH. 
\end{proof}

Note that this bound significantly improves the old lower bound of $2^{\Omega(\sqrt{n})}$ implied by the the reduction from $k$-{\sc Clique} reduction: Since $k \leq n$, Theorem~\ref{lem:ethbound} implies that

\begin{corollary}\label{cor:lb3col}
Assuming ETH, there is no algorithm solving $1| d_j,\textit{prec}, p_j=1 |k\text{-sched},C_{\max}$ in $n^{o(k/\log(k))}$ where $n$ is the number of jobs.
\end{corollary}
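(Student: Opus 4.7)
The plan is to show that Corollary~\ref{cor:lb3col} follows essentially immediately from the reduction already constructed in the proof of Theorem~\ref{lem:ethbound} by a direct substitution argument, once we carefully record the relationship between the three relevant size parameters: the size $n'+m'$ of the input 3-coloring instance, the number of jobs $n$ in the reduced scheduling instance, and the parameter $k$.

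First, I would recount the sizes produced by the reduction. For each vertex we create $6$ jobs and for each edge we create $12$ jobs, so the reduced instance has $n = 6n' + 12m'$ jobs in total. The proof of Theorem~\ref{lem:ethbound} also sets $k = 2n' + 2m'$. In particular, both $n$ and $k$ are $\Theta(n' + m')$, so $n = \Theta(k)$ and $\log n = \Theta(\log k)$ in the reduced instance.

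Next, I would derive the contradiction. Suppose, towards contradiction, there is an algorithm solving $1|d_j,\textit{prec},p_j=1|k\text{-sched},C_{\max}$ in time $n^{o(k/\log k)}$. Applying it to the reduced instance and substituting $k = \Theta(n)$, the runtime becomes
\[
n^{o(k/\log k)} \;=\; n^{o(n/\log n)} \;=\; 2^{(\log n)\cdot o(n/\log n)} \;=\; 2^{o(n)}.
\]
Since $n = O(n' + m')$, this yields a $2^{o(n'+m')}$-time algorithm for the original 3-coloring instance via the reduction of Theorem~\ref{lem:ethbound}, contradicting the ETH-based lower bound for \textsc{3-Coloring}.

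There is essentially no technical obstacle here; the only thing to be careful about is that we need $k = \Theta(n)$ rather than merely $k = O(n)$, so that the substitution $k/\log k = \Theta(n/\log n)$ is justified and the exponent collapses all the way down to $2^{o(n)}$. This is guaranteed by the construction in Theorem~\ref{lem:ethbound}, where $n \le 12(n'+m')$ and $k = 2(n'+m')$, so $n/6 \le k \le n$. Consequently the hypothetical $n^{o(k/\log k)}$-time algorithm translates to a subexponential algorithm for \textsc{3-Coloring}, which is the desired contradiction.
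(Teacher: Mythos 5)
Your proposal is correct and follows essentially the same route as the paper, which derives the corollary directly from Theorem~\ref{lem:ethbound} by observing the relationship between $k$ and $n$ in the reduced instances and substituting to turn a hypothetical $n^{o(k/\log k)}$ algorithm into a $2^{o(n)}$ one. If anything, your write-up is slightly more careful: the paper only remarks ``since $k\le n$,'' whereas the substitution genuinely needs $k=\Omega(n)$ as well, which you verify explicitly via $n/6\le k\le n$.
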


\subparagraph*{Non-unit processing times.}
We show that having non-unit processing times combined with precedence constraints make the problem $\mathsf{W}[1]$-hard even on one machine. The proof of Theorem~\ref{thm:nonunit} heavily builds on the reduction from $k$-{\sc Clique} to $k$-{\sc Tasks On Time} by Fellows and McCartin~\cite{fellows2003parametric}.
\begin{theorem}\label{thm:nonunit}
$1|prec|k\text{-sched},C_{\max}$ is $\mathsf{W}[1]$-hard, parameterized by $k$. 
\end{theorem}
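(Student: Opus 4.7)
The plan is to give a parameterized reduction from the $k$-{\sc Clique} problem, mirroring the Fellows--McCartin reduction at a structural level (vertex-jobs and edge-jobs with edge-jobs depending on their endpoints) but replacing the role played by job-dependent deadlines with a suitable choice of non-unit processing times. Concretely, given an instance $(G=(V,E),k)$ of $k$-{\sc Clique}, I would build the following instance of $1|\text{prec}|k'\text{-sched},C_{\max}$. For every vertex $v\in V$, create a \emph{vertex job} $J_v$ with large processing time $M := \binom{k}{2}+1$. For every edge $e\in E$, create an \emph{edge job} $J_e$ with processing time $1$. For every edge $e=\{u,v\}$, add the precedence constraints $J_u\prec J_e$ and $J_v\prec J_e$. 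Finally, set the target $k' := k+\binom{k}{2}$ and the makespan bound $C_{\max} := kM + \binom{k}{2}$.

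For the forward direction, I would argue that if $G$ contains a $k$-clique $K$, one can schedule the $k$ vertex jobs corresponding to $K$ followed (in any valid order) by the $\binom{k}{2}$ edge jobs corresponding to the pairs of vertices in $K$. All precedence constraints are satisfied by construction, the number of scheduled jobs is $k'$, and the total processing time is exactly $kM+\binom{k}{2}=C_{\max}$.

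For the reverse direction, suppose there exists a feasible schedule of at least $k'$ jobs with makespan at most $C_{\max}$. Let $V'\subseteq V$ and $E'\subseteq E$ index the scheduled vertex jobs and edge jobs, so $|V'|+|E'|\ge k+\binom{k}{2}$ and, by the precedence constraints, both endpoints of every edge in $E'$ lie in $V'$. The heart of the argument is a two-sided pigeonhole computation. First, if $|V'|\ge k+1$, then the total processing time is at least $(k+1)M > kM+\binom{k}{2}=C_{\max}$, contradiction; hence $|V'|\le k$. Second, if $|V'|\le k-1$, then $|E'|\ge \binom{k}{2}+1$, but the number of edges with both endpoints in $V'$ is at most $\binom{k-1}{2}=\binom{k}{2}-(k-1)$, which is strictly less than $\binom{k}{2}+1$, contradiction. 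Hence $|V'|=k$ and $|E'|=\binom{k}{2}$, and a simple graph on $k$ vertices carrying $\binom{k}{2}$ edges must be complete, so $V'$ induces a $k$-clique in $G$.

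This yields an fpt-reduction from $k$-{\sc Clique} (parameter $k$) to $1|\text{prec}|k'\text{-sched},C_{\max}$ (parameter $k'=k+\binom{k}{2}=\Theta(k^2)$), establishing $\mathsf{W}[1]$-hardness; since $k$-{\sc Clique} admits no $n^{o(k)}$ algorithm under ETH, one also gets the claimed $n^{o(\sqrt{k})}$ lower bound listed in Table~\ref{tab:tricho}. The main subtlety I expect is picking the numerical values so that two independent counting arguments snap into alignment: the processing time $M$ must be large enough that the makespan budget itself forbids more than $k$ vertex jobs, while the choice $k'=k+\binom{k}{2}$ must force at least $\binom{k}{2}$ edge jobs so that the standard \emph{edges-imply-vertices} counting pins down a clique. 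Once $M>\binom{k}{2}$ is fixed, both implications become tight and neither release dates nor deadlines are needed.
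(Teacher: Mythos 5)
Your reduction is correct and is essentially the same as the paper's: the paper also reduces from $k$-{\sc Clique} using vertex jobs and edge jobs, precedence constraints from both endpoints to each edge job, and target $k'=k+\binom{k}{2}$, the only difference being that it gives vertex jobs processing time $2$ (with $C_{\max}=2k+\binom{k}{2}$) and deduces that at most $k$ vertex jobs fit from the pair of inequalities $2l+m\le C_{\max}$ and $l+m\ge k'$, rather than from a large processing time $M>\binom{k}{2}$. Both choices give the same $\Theta(k^2)$ parameter blowup and hence the same conclusions.
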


\begin{proof}
	The proof is a reduction from $k$-{\sc Clique}. We start with $G=(V,E)$, an instance of $k$-{\sc Clique}. For each vertex $v \in V$, create a job $j_v$ with $p_{j_v} = 2$. For each edge $e \in E$, create a job $j_e$ with $p_{j_e}=1$. Now for each edge $(u,v)$, add the following two precedence relations: $j_u \prec j_e$ and $j_v \prec j_e$, so before one can process a job associated with an edge, both jobs associated with the endpoints of that edge need to be finished. Now let~$k' = k + \frac{1}{2}k(k-1)$ and $C_{\max} = 2k + \frac{1}{2}k(k-1)$. We will now prove that~$1|prec|k'\text{-sched},C_{\max}$ is a yes instance if and only of $k$-{\sc Clique} is a yes instance.
	
	Assume that the $k$-{\sc Clique} instance is a yes instance, then process first the $k$ jobs associated with the vertices of the $k$-{clique}. Next process the $\frac{1}{2}k(k-1)$ jobs associated with the edges of the $k$-{clique}. In total, $k+\frac{1}{2}k(k-1)=k'$ jobs are now processed with a makespan of $2k + \frac{1}{2}k(k-1)$. Hence, the instance of $1|prec|k'\text{-sched},C_{\max}$ is a yes instance.
	
	For the other direction, assume $1|prec|k'\text{-sched},C_{\max}$ to be a yes instance, so we have found a feasible schedule. For any feasible schedule, if one schedules $l$ jobs associated with vertices, then at most $\frac{1}{2}l(l-1)$ jobs associated with edges can be processed, because of the precedence constraints. However, because $k'=k+\frac{1}{2}k(k-1)$ jobs were done in the feasible schedule before $C_{\max} = 2k+\frac{1}{2}k(k-1)$, at most $k$ jobs associated with vertices can be processed, because they have processing time of size $2$. Hence, we can conclude that exactly $k$ vertex-jobs and $\frac{1}{2}k(k-1)$ edge-jobs were processed. Hence, there were $k$ vertices connected through $\frac{1}{2}k(k-1)$ edges, which is a $k$-clique. 
\end{proof}

The proofs of Theorem~\ref{Thm:1|prec,rj|k-Cmax} and Corollary~\ref{Cor:2|prec|k-Cmax} are reductions from \textsc{Partitioned Subgraph Isomorphism}. Let $P=(V',E')$ be a `pattern' graph, $G= (V,E)$ be a `target' graph, and $\chi : V \to V'$ a `coloring' of the vertices of $G$ with elements from $P$. A \emph{$\chi$-colorful $P$-subgraph of $G$} is a mapping $\varphi: V' \rightarrow V$ such that (1) for each $\{u,v\} \in E'$ it holds that $\{\varphi(u),\varphi(v)\} \in E$ and (2) for each $u \in V'$ it holds that $\chi(\varphi(u))=u$. If $\chi$ and $G$ are clear from the context they may be omitted in this definition.

\begin{definition}[\textsc{Partitioned Subgraph Isomorphism}]
	Given graphs $G =(V,E)$ and $P=(V',E')$, $\chi : V \to V'$. Determine whether there is a $\chi$-colorful $P$-subgraph of $G$.
\end{definition}

\begin{theorem}[Marx \cite{marx2010can}]
	\textsc{Partitioned Subgraph Isomorphism} cannot be solved in $n^{o(|E'|/ \log |E'|)}$ time assuming the Exponential Time Hypothesis (ETH).
\end{theorem}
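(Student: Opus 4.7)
The plan is to reduce $3$-SAT to \textsc{Partitioned Subgraph Isomorphism} (PSI) so that any $n^{o(|E'|/\log|E'|)}$ time algorithm for PSI would decide $3$-SAT in $2^{o(N)}$ time, contradicting ETH. By the Sparsification Lemma we may assume the input $3$-CNF formula $\varphi$ has $N$ variables and only $M=O(N)$ clauses.

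I would take the pattern $P$ to be the variable-clause \emph{incidence graph} of $\varphi$: introduce one vertex $u_i$ per variable $x_i$, one vertex $w_j$ per clause $c_j$, and an edge $\{u_i,w_j\}$ iff $x_i$ occurs in $c_j$. The target $G$ encodes assignments as follows. For each variable $x_i$ I add two vertices $a_i^0,a_i^1$, each colored $u_i$, representing the two possible truth values. For each clause $c_j$ I add one vertex $b_j^T$, colored $w_j$, for every one of the at most seven partial assignments $T$ of the three variables of $c_j$ that satisfies $c_j$. Finally, I connect $a_i^b$ to $b_j^T$ whenever $x_i$ appears in $c_j$ and $T(x_i)=b$.

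For correctness, any $\chi$-colorful $P$-subgraph must send each $u_i$ to some $a_i^{b_i}$ and each $w_j$ to some $b_j^{T_j}$, and every pattern edge $\{u_i,w_j\}$ forces the target edge $\{a_i^{b_i},b_j^{T_j}\}$, i.e.\ $T_j(x_i)=b_i$. Hence the $b_i$'s are consistent across all clauses in which they appear and define a single global truth assignment; each $T_j$ was by construction a satisfying partial assignment, so every clause of $\varphi$ is satisfied. Conversely, any satisfying assignment $\alpha$ yields the embedding $u_i\mapsto a_i^{\alpha(x_i)}$, $w_j\mapsto b_j^{\alpha|_{c_j}}$.

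The resulting instance has total size $n=\Theta(N)$ and pattern edge count $|E'|=3M=\Theta(N)$, so an $n^{o(|E'|/\log|E'|)}$ algorithm would solve $3$-SAT in $N^{o(N/\log N)}=2^{o(\log N\cdot N/\log N)}=2^{o(N)}$ time, violating ETH. The delicate point I expect to be the main obstacle is the $\log|E'|$ factor in the exponent: a naive reduction typically only excludes $n^{o(|E'|)}$. Here the stronger bound works because $\log n\sim\log|E'|\sim\log N$ together with $n,|E'|=\Theta(N)$, so that $\log n\cdot|E'|/\log|E'|=\Theta(N)$ is exactly the regime ETH controls. This $\log$ factor is in fact intrinsic, matching known $n^{O(|E'|/\log|E'|)}$-time algorithms for PSI via refined tree-decomposition methods, so the bound in the theorem is best possible.
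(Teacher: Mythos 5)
Your reduction is sound as far as it goes: the incidence-graph pattern, the at-most-seven satisfying local assignments per clause vertex, and the consistency argument are all correct, and the arithmetic $\log n\cdot o(|E'|/\log|E'|)=o(N)$ (valid because $n$ and $|E'|$ are both $\Theta(N)$ after sparsification) does rule out an $n^{o(|E'|/\log|E'|)}$-time algorithm on the instances you produce. So you do obtain the statement as it is literally transcribed here; the paper itself gives no proof but cites Marx, so there is no argument to compare against line by line.

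However, what you prove is substantially weaker than the theorem the citation refers to, and the difference is precisely the content of Marx's result. In your construction $|E'|=\Theta(n)$ always, so the excluded running time $n^{o(|E'|/\log|E'|)}$ collapses to $2^{o(n)}$: you have re-derived ordinary ETH-hardness of one $\mathsf{NP}$-hard problem, with no parameterized content. Marx's theorem excludes running times of the form $f(P)\cdot n^{o(|E'|/\log|E'|)}$ for every computable $f$, which forces the hard instances to have $|E'|$ arbitrarily small compared to $n$; that is the form one wants if the downstream scheduling bounds $n^{o(k/\log k)}$ are to be read as genuinely parameterized lower bounds (your version would not even exclude an $f(k)\cdot n^{\mathcal{O}(1)}$-time algorithm for \textsc{Partitioned Subgraph Isomorphism}). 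Proving that strong form is the hard part: Marx embeds constraint-satisfaction instances into \emph{arbitrary} patterns of sufficiently large treewidth, and the $\log$ loss in the exponent comes out of the graph-embedding and expansion machinery, not out of the bookkeeping you describe. Your closing claim that the $\log$ factor is matched by known $n^{\mathcal{O}(|E'|/\log|E'|)}$-time algorithms is also false: the known upper bound is $n^{\mathcal{O}(\mathrm{tw}(P))}$, for bounded-degree expander patterns $\mathrm{tw}(P)=\Theta(|E'|)$, and whether the $\log$ in the lower bound can be removed is the open problem that gives Marx's paper its title.
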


We will now reduce \textsc{Partitioned Subgraph Isomorphism} to $1|\text{prec},r_j|k\text{-sched},C_{\max}$.
\begin{theorem} \label{Thm:1|prec,rj|k-Cmax}
	$1|\text{prec},r_j|k\text{-sched},C_{\max}$ cannot be solved in $n^{o(k / \log k)}$ time assuming the Exponential Time Hypothesis (ETH). 
\end{theorem}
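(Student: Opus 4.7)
The plan is to reduce from \textsc{Partitioned Subgraph Isomorphism} (PSI) with pattern $P = (V', E')$ and target $G = (V, E)$ equipped with a coloring $\chi \colon V \to V'$, invoking Marx's ETH lower bound of $n^{o(|E'|/\log|E'|)}$ for PSI. We may assume without loss of generality that $P$ has no isolated vertex. The target parameter will be $k = |V'| + |E'| = \Theta(|E'|)$, and the constructed scheduling instance will have size polynomial in $|V| + |E'|$; hence any $n^{o(k/\log k)}$-time algorithm for $1|\text{prec},r_j|k\text{-sched},C_{\max}$ would yield an $|V|^{o(|E'|/\log|E'|)}$-time algorithm for PSI, contradicting the ETH.

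Fix orderings $V' = \{u_1, \dots, u_{|V'|}\}$ and $E' = \{e_1, \dots, e_{|E'|}\}$, and set $p = |E'|+1$. For each $v \in V$ with $\chi(v) = u_i$, introduce a \emph{candidate} job $J_v$ with processing time $p$ and release date $(i-1)p$. For each $e_l = \{u_i, u_j\} \in E'$ and each $\{a,b\} \in E$ with $\chi(a) = u_i$ and $\chi(b) = u_j$, introduce a \emph{witness} job $W_{e_l,\{a,b\}}$ with processing time $1$, release date $|V'|p + (l-1)$, and precedences $J_a \prec W_{e_l,\{a,b\}}$ and $J_b \prec W_{e_l,\{a,b\}}$. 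Finally set $C_{\max} = |V'|p + |E'|$ and $k = |V'| + |E'|$.

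The forward direction is clear: any colorful subgraph isomorphism $\varphi$ induces the schedule that places $J_{\varphi(u_i)}$ in $[(i-1)p, ip)$ and $W_{e_l,\{\varphi(u_i),\varphi(u_j)\}}$ in $[|V'|p + (l-1),\, |V'|p + l)$. The reverse direction rests on a tight packing argument. Let $a,b$ be the numbers of scheduled candidates and witnesses, and let $\tau$ denote the total length of candidate jobs overlapping $[|V'|p, C_{\max})$. On the one hand, $ap + b \le C_{\max} = |V'|p + |E'|$ together with $a + b = k$ yields $a \le |V'|$ and hence $b \ge |E'|$. On the other hand, witnesses fit only in the length-$|E'|$ window $[|V'|p, C_{\max})$, so $b \le |E'| - \tau$; this forces $b = |E'|$, $a = |V'|$, and $\tau = 0$. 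Because every $u_m \in V'$ is incident to some edge of $E'$, skipping all $u_m$-candidates would make every witness at an edge incident to $u_m$ infeasible, contradicting $b = |E'|$; so exactly one candidate per color is scheduled, defining $\varphi(u_m)$. A pigeonhole argument then forbids two distinct witnesses for the same $e_l$ (they would require two distinct candidates of some color, breaking one-per-color), so exactly one witness is scheduled per edge; the very existence of $W_{e_l,\{\varphi(u_i),\varphi(u_j)\}}$ in the construction certifies $\{\varphi(u_i),\varphi(u_j)\} \in E$.

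The main obstacle I anticipate is the packing analysis: ruling out candidate jobs that straddle the boundary at time $|V'|p$ and consume witness slots. The choice $p = |E'|+1 > |E'|$ makes the candidate and witness windows effectively non-interacting, which is what pins down $\tau = 0$ and, ultimately, the clean counts $a=|V'|$, $b=|E'|$. With parameter $k = \Theta(|E'|)$ and instance size polynomial in $|V| + |E'|$, the claimed lower bound follows from Marx's theorem.
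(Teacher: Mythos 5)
Your overall strategy is the same as the paper's (a reduction from \textsc{Partitioned Subgraph Isomorphism} with vertex-selection enforced by release-date windows and edge-verification by unit jobs with two precedence constraints), but your choice of \emph{uniform} processing time $p=|E'|+1$ for all candidate jobs breaks the reverse direction. The packing argument correctly yields $a=|V'|$, $b=|E'|$ and $\tau=0$, but it does \emph{not} force one candidate per color: a candidate of color $u_i$ only has a \emph{lower} bound $(i-1)p$ on its start time, so two candidates of an early color can legally occupy two consecutive length-$p$ slots, crowding out a later color entirely. Your attempt to rule this out (``skipping all $u_m$-candidates would contradict $b=|E'|$'') is circular: $b=|E'|$ can be met by scheduling two \emph{distinct} witnesses for the same pattern edge, which is exactly what becomes possible once some color has two scheduled candidates. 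Concretely, take $P$ the path $u_1-u_2-u_3$, color classes $\{a_1,a_2\},\{b\},\{c\}$ in $G$, with edges $\{a_1,b\},\{a_2,b\}$ and no edge into the class of $u_3$. Then PSI is a no-instance, yet scheduling $J_{a_1},J_{a_2},J_b$ in $[0,9)$ followed by the two witnesses $W_{e_1,\{a_1,b\}},W_{e_1,\{a_2,b\}}$ in $[9,11)$ gives $k=5$ jobs by $C_{\max}=11$. So your reduction accepts a no-instance.

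The paper avoids exactly this by giving the color-$i$ vertex jobs processing time $3^{s+1-i}$ and release date $t_{i-1}=\sum_{j=1}^{i-1}3^{s+1-j}$: since $2\cdot 3^{s+1-i} > t_s-t_{i-1}$, at most one job of color $i$ fits before the witness window \emph{unconditionally}, regardless of how other colors are scheduled. From ``at most one candidate per color'' the counts then force exactly one candidate per color and exactly one witness per pattern edge (two witnesses for the same edge would need two candidates of one color), with no circularity. If you want to repair your construction, you need processing times (or release dates) that make two same-color candidates physically unable to coexist before the witness window; geometrically decreasing lengths as in the paper is one way to do it, and it keeps $k=\Theta(|E'|)$ so the $n^{o(k/\log k)}$ bound still follows from Marx's theorem.
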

\begin{proof}
	Let $G = (V,E)$, $P = (V',E')$ and $\chi : V \to V'$. We will write $V' = \{1,\dots,s\}$. Define for $i=0,\dots,s$ the following important time stamps: $$t_i := \sum_{j=1}^i 3^{s+1-j}.$$ 
	Construct the following jobs for the instance of the $1|\text{prec},r_j|k\text{-sched},C_{\max}$ problem:
	\begin{itemize}
		\item For $i=1,\dots,s$:
		\begin{itemize}
			\item For each vertex $v \in V$ such that $\chi(v) =i$, create a job $j_v$ with processing time $p(j_v)= 3^{s+1 - i}$ and release date $t_{i-1}$.
		\end{itemize}
		\item For each $(v,w) \in E$ such that $(\chi(v),\chi(w))\in E'$, create a job $j_{v,w}$ with $p(j_{v,w}) = 1$ and release date $t_s$. Add precedence constraints $j_v \prec j_{v,w}$ and $j_w \prec j_{v,w}$.
	\end{itemize}
	Then ask whether there exists a solution to the scheduling problem for $k = s + |E'|$ with makespan $C_{\max} \le t_s + |E'|$.  
	
	Let the \textsc{Partitioned Subgraph Isomorphism} instance be a yes-instance and let $\varphi: V(P) \to V(G)$ be a colorful $P$-subgraph. We claim the following schedule is feasible:
	
	\begin{itemize}
		\item For $i=1,\dots,s$:
		\begin{itemize}
			\item Process $j_{\varphi(i)}$ at its release date $t_{i-1}$.
		\end{itemize}
		\item Process for each $(i,i') \in E'$ the job $j_{\varphi(i),\varphi(i')}$ somewhere in the interval $[t_s,t_s+|E'|]$.
	\end{itemize}
	Notice that all jobs are indeed processed after their release date and that in total there are $k =s + |E'|$ processed before $C_{\max} \le t_s+ |E'|$. 
	Furthermore, all precedence constraints are respected as any edge job is processed after both its predecessors. 
	Also, the edge jobs $e^{\varphi(i),\varphi(i')}$ must exist, as $\varphi(P)$ is a properly colored $P$-subgraph. 
	Therefore, we can conclude that indeed this schedule is feasible.
	
	For the other direction, assume that there is a solution to the created instance of $1|\text{prec},r_j|k\text{-sched},C_{\max}$. 
	Define $J_i = \{j_v : \chi(v) = i\}$. 
	We will first prove that at most $1$ job from each set $J_i$ can be processed in a feasible schedule. 
	To do this, we first prove that at most $1$ job from each set $J_i$ can be processed before $t_s$.
	Any job in $J_i$ has release date $t_{i-1} = \sum_{j=1}^{i-1} 3^{s+1-j}$. Therefore, there is only $t_s-t_{i-1} = \sum_{j=i}^s 3^{s+1-j}$ time left to process the jobs from $J_i$ before time $t_s$. However, the processing time of any job in $J_i$ is $3^{s+1-i}$, and since $2\cdot 3^{s+1-i} > \sum_{j=i}^s 3^{s+1-j}$, at most $1$ job from $J_i$ can be processed before $t_s$. 
	Since all jobs not in some $J_i$ have their release date at $t_s$, at most $s$ jobs are processed at time $t_s$. Thus at time $t_s$, there are $|E'|$ time unit left to process $|E'|$ jobs, because of the choice of $k$ and makespan. 
	Hence the only way to get a feasible schedule is to process exactly one job from each set $J_i$ at its respective release date and process exactly $|E'|$ edge jobs after $t_s$. 
	
	Let $v^i$ be the vertex, such that $j_v$ was processed in the feasible schedule with color $i$. 
	We will show that $\varphi:V(P) \to V(G)$, defined as $\varphi(i) = v^i$, is a function such that $\varphi(P)$ is a properly colored $P$-subgraph of $G$. 
	Hence, we are left to prove that for each $(i,i') \in E'$, the edge $(\varphi(i),\varphi(i')) \in E$, i.e. that for each $(i,i') \in E'$, the job $j_{\varphi(i),\varphi(i')}$ was processed. 
	Because only the vertex jobs $j_{\varphi(1)}, j_{\varphi(2)}, \dots, j_{\varphi(s)}$ were processed, the precedence constraints only allow for edge jobs $j_{\varphi(i),\varphi(i')}$ to be processed. 
	We created edge job $j_{v,w}$ if and only if $(v,w) \in E$ and $(\chi(v),\chi(w)) \in E'$, hence the $|E'|$ edge jobs have to be exactly the edge jobs $j_{\varphi(i),\varphi(i')}$ for $(i,i')\in E'$. 
	Therefore, we proved indeed that $\varphi(P)$ is a colorful $P$-subgraph of $G$.
	
	Notice that $k= s+|E'| \le 3|E'|$ as we may assume the number of vertices in $P$ is at most $2|E'|$. Hence the given bound follows.
\end{proof}

\begin{corollary}\label{Cor:2|prec|k-Cmax}
	$2|\text{prec}|k\text{-sched},C_{\max}$ cannot be solved in $n^{o(k / \log k)}$ time assuming the Exponential Time Hypothesis (ETH). 
\end{corollary}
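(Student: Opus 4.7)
The plan is to extend the construction of Theorem~\ref{Thm:1|prec,rj|k-Cmax} by simulating the release dates with a chain of \emph{blocker} jobs that occupies one of the two machines. Starting from the same PSI instance $(G,P,\chi)$ with $V'=\{1,\ldots,s\}$, I retain the vertex jobs $j_v$ (with $p(j_v)=3^{s+1-\chi(v)}$) and edge jobs $j_{v,w}$ (with $p(j_{v,w})=1$ and $j_v,j_w\prec j_{v,w}$), but drop all release dates. I then add a fresh chain $B_1\prec B_2\prec\cdots\prec B_s$ with $p(B_i)=3^{s+1-i}$, together with the precedences $B_{\chi(v)-1}\prec j_v$ for $\chi(v)\geq 2$ and $B_s\prec j_{v,w}$. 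Setting $k'=2s+|E'|$ and $C'_{\max}=t_s+\lceil |E'|/2\rceil$ with $t_s=\sum_{i=1}^s 3^{s+1-i}$, we have $k'=O(|E'|)$ (assuming $s\leq 2|E'|$), so an $n^{o(k'/\log k')}$ algorithm would contradict Marx's ETH-based lower bound for PSI. The forward direction is routine: given a colorful $P$-subgraph $\varphi$, I schedule the blockers contiguously on machine~1 in $[0,t_s]$, the vertex jobs $j_{\varphi(1)},\ldots,j_{\varphi(s)}$ contiguously on machine~2 in $[0,t_s]$ in order of color (the precedence $B_{i-1}\prec j_{\varphi(i)}$ holds because $B_{i-1}$ ends at time $t_{i-1}$), and distribute the $|E'|$ edge jobs $j_{\varphi(i),\varphi(j)}$ for $(i,j)\in E'$ across both machines in $[t_s,C'_{\max}]$.

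The central tool for the converse is the following \emph{slot observation}: a vertex job of color $j$ has processing time $3^{s+1-j}$ and, due to the chain $B_1\prec\cdots\prec B_{j-1}\prec j_v$, cannot start before time $t_{j-1}$; consequently, on any single machine the sub-interval $[t_{i-1},t_i]$ of length $3^{s+1-i}$ can contain at most one vertex job, and whenever it does the job has color exactly $i$ (colors $<i$ are too long to fit, and colors $>i$ start only after $t_i$). I would first rule out the scenario in which only $r<s$ blockers are scheduled: there $\mathcal E=\emptyset$ (since $B_s$ is missing), the only available colors are $\{1,\ldots,r+1\}$, and the sub-intervals $[t_{r+1},t_{r+2}],\ldots,[t_{s-1},t_s]$ admit no vertex job because even color $r+1$ has processing time $3^{s-r}$ larger than any of those lengths. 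Summing the slot bounds across both machines and the tail $[t_s,C'_{\max}]$ caps the scheduled vertex-job count at roughly $r+1+O(|E'|/3^{s-r})$, which is strictly less than the required $2s+|E'|-r$ whenever $r<s$ and $|E'|\geq 1$.

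With all $s$ blockers scheduled, a work/capacity argument using the tight $C'_{\max}$ completes the analysis: each extra vertex job beyond one per color costs at least~$3$ processing units while each edge job only costs~$1$, so the constraint that both extras and all edge jobs live in the post-$t_s$ window of total capacity at most $|E'|+1$ forces $|\mathcal V|=s$ and $|\mathcal E|=|E'|$ (the residual case $|\mathcal V|=s-1,\,|\mathcal E|=|E'|+1$, which can only arise for odd $|E'|$, is excluded because a missing color $c$ leaves at most $|E'|-\deg_P(c)\leq |E'|$ edge jobs consistent with $\mathcal V$). Applying the slot observation once more then forces $\mathcal V$ to consist of exactly one vertex $\varphi(i)$ per color, and since each $P$-edge $(i,j)$ yields at most the single candidate edge job $j_{\varphi(i),\varphi(j)}$, the equality $|\mathcal E|=|E'|$ requires $(\varphi(i),\varphi(j))\in E$ for every $(i,j)\in E'$, so $\varphi$ is the desired colorful $P$-subgraph. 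The main obstacle is the infeasibility argument for $r<s$ scheduled blockers; it crucially exploits the factor-$3$ spacing between consecutive processing times so that substituting smaller-color vertex jobs cannot compensate for the missing blockers.
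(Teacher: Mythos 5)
Your reduction is essentially the paper's: both simulate the release dates of Theorem~\ref{Thm:1|prec,rj|k-Cmax} by a chain of blocker jobs with the same processing times $3^{s+1-i}$ running on the second machine, and both then force the intended structure by a counting argument, so the approach is the same. The only real difference is bookkeeping: the paper additionally appends $|E'|$ unit-length dummy jobs after $B_s$ and sets $k=2s+2|E'|$ with makespan $t_s+|E'|$, which makes the step ``all new jobs must be scheduled'' immediate, whereas your tighter budget $k'=2s+|E'|$, $C'_{\max}=t_s+\lceil|E'|/2\rceil$ commits you to the more delicate slot/work accounting you sketch --- where, for instance, the slot observation should be phrased for jobs \emph{ending} in the half-open interval $(t_{i-1},t_i]$ rather than contained in $[t_{i-1},t_i]$, and the ``each duplicate costs at least $3$ post-$t_s$ units'' claim needs to account for vertex jobs straddling $t_s$ and for how the blockers are split across the two machines.
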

\begin{proof}
	We can use the same idea for the reduction from \textsc{Partitioned Subgraph Isomorphism} as in the proof of Theorem~\ref{Thm:1|prec,rj|k-Cmax}, except for the release dates, as they are not allowed in this type of scheduling problem. 
	To simulate the release dates, we use the second machine as a release date machine, meaning that we will create a job for each upcoming release date and will require these new jobs to be processed. 
	More formally: For $i=1,\dots,s$, create a job $j_{r_i}$ with processing time $3^{s+1-i}$ and precedence constraints $j_{r_i} \prec j$ for any job $j$ that had release date $t_i$ in the original reduction. Furthermore let $j_{r_i} \prec j_{r_{i+1}}$. 
	Then we add $|E'|$ jobs $j'$ with processing time $1$ and with precedence relations $j_{r_s} \prec j'$. We then ask whether there exists a feasible schedule with $k = 2s + 2|E'|$ and with makespan $t_s + |E'|$. All newly added jobs are required in any feasible schedule and therefore, all other arguments from the previous reduction also hold. Finally, note that $k$ is again linear in $|E'|$. 
\end{proof}

\section{Result Type G: $k$-scheduling without Precedence Constraints}
\label{sec:colcoding}
The problem $P|k\text{-sched}|C_{\max}$, cannot be solved in $2^{o(k)}$ time assuming the ETH by a reduction to \textsc{Subset Sum}.
 We show that the problem is fixed-parameter tractable with a matching run time in $k$, even in the case of unrelated machines, release dates and deadlines, denoted by $R|r_j,d_j,k\text{-sched}|C_{\max}$.

\begin{theorem}
	$R|r_j,d_j,k\text{-sched}|C_{\max}$ is fixed-parameter tractable in $k$ and can be solved in $\mathcal{O}^*((2e)^kk^{\mathcal{O}(\log k)})$ time.
\end{theorem}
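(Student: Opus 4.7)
The plan is to combine binary search on the target makespan with color coding and a two-level subset dynamic program. Binary searching over the polynomially many relevant values of $C_{\max}$ reduces the task to deciding, for a fixed target $T$, whether $k$ jobs can be selected and feasibly scheduled with all completion times at most $T$; after this reduction we may replace each $d_j$ by $\min(d_j,T)$. We then randomly color the $n$ jobs with $k$ colors: for any optimal set $J^*$ of $k$ jobs, the probability that all jobs in $J^*$ receive pairwise distinct colors is $k!/k^k \ge e^{-k}$. We derandomize using a family of $(n,k)$-perfect hash functions of size $e^k\, k^{\mathcal{O}(\log k)} \log n$ due to Naor, Schulman and Srinivasan, reducing to the following \emph{colorful} feasibility problem: for a fixed coloring $\chi:[n]\to[k]$, can we select one job of each color and assign them to machines so that a feasible schedule exists?

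For a fixed $\chi$ we run a two-level DP over color subsets $S \subseteq [k]$. First, for each machine $i$ and each $S$, compute
\[
h_i(S) \;=\; \min_{c \in S,\; j:\chi(j)=c}\Bigl\{\, \max\bigl(h_i(S\setminus\{c\}),\, r_j\bigr) + p_{ij} \,\Bigr\},
\]
retaining only candidates satisfying $\max(h_i(S\setminus\{c\}),r_j)+p_{ij} \le \min(d_j,T)$, with $h_i(\emptyset)=0$. This equals the minimum makespan of a feasible single-machine schedule on $i$ using one job of each color in $S$. Correctness is by induction on $|S|$ via an exchange argument on the last-completed job: in any optimum schedule some job $j$ of color $c$ finishes last, it starts at $\max(r_j,s)$ where $s$ upper bounds the makespan of the remaining jobs, and by induction $s \ge h_i(S\setminus\{c\})$. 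Each table $h_i(\cdot)$ is computed in $\mathcal{O}(2^k\, k\, n)$ time. Second, setting $H_i(S) = [h_i(S) \le T]$, we compose machines via the Boolean disjoint subset convolution
\[
F_i(S) \;=\; \bigvee_{A\,\sqcup\, B = S} F_{i-1}(A) \wedge H_i(B),\qquad F_0 = [\emptyset],
\]
returning $F_m([k])$. Using the M\"obius-transform subset convolution of Bj\"orklund, Husfeldt, Kaski and Koivisto, each of the $m$ steps runs in $\mathcal{O}^*(2^k)$ time, so the machine-composition phase costs $\mathcal{O}^*(2^k)$ per coloring after polynomial factors in $m$.

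Summing over the $e^k\, k^{\mathcal{O}(\log k)}\log n$ colorings in the hash family gives the advertised bound $\mathcal{O}^*((2e)^k\, k^{\mathcal{O}(\log k)})$. The main technical obstacle is correctness of the inner DP: with release dates, the familiar ``schedule in order of increasing deadline'' heuristic can fail, so the job with the largest deadline need not be scheduled last in an optimum schedule, and one cannot canonicalize the last job a priori. The remedy is to enumerate \emph{all} candidate last-completed jobs $j$ inside the $\min$ of the recurrence; the exchange argument above then shows that for the optimal choice of $j$ the preceding jobs form a schedule of $S \setminus \{\chi(j)\}$ of makespan at most $\max(h_i(S\setminus\{\chi(j)\}),r_j)$, closing the induction.
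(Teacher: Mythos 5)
Your proposal is correct and follows essentially the same route as the paper: color coding with $k$ colors, a per-machine subset DP over color sets with the identical last-job recurrence (the paper's $B_i(X)$ is your $h_i(S)$), fast subset convolution to combine machines, and derandomization via perfect hash families yielding the $(2e)^k k^{\mathcal{O}(\log k)}$ bound. Your write-up is if anything slightly more explicit than the paper's on the binary search over the target makespan and on the Naor--Schulman--Srinivasan family that accounts for the $k^{\mathcal{O}(\log k)}$ factor.
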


\begin{proof}
	We give an algorithm that solves any instance of $R|r_j,d_j,k\text{-sched}|C_{\max}$ within $\mathcal{O}^*((2e)^kk^{\mathcal{O}(\log k)})$ time. The algorithm is a randomized algorithm that can be de-randomized using the color coding method, as described by Alon et al. \cite{alon1995color}. The algorithm first (randomly) picks a coloring $c : \{1,...,n\} \to \{1,...,k\}$, so each job is given one of the $k$ available colors. We then compute whether there is a feasible colorful schedule, i.e. a feasible schedule that processes exactly one job of each color. If this colorful schedule can be found, then it is possible to schedule at least $k$ jobs before $C_{\max}$. 
	
	Given a coloring $c$, we compute whether there exists a colorful schedule in the following way. Define for $ 1 \le i \le m$ and $X\subseteq \{1,...,k\}$:  
	\begin{align*}B_i(X) = &\text{minimum makespan of all schedules on machine } i \text{ processing } |X| \text{ jobs,}\\ &\text{each from a different color in } X.
	\end{align*}
	Clearly $B_i(\emptyset) = 0$, and all values $B_i(X)$ can be computed in $\mathcal{O}(2^k n)$ time  using the following:
	\begin{lemma}\label{lemmab}
	Let $\min\{\emptyset\} = \infty$. Then
		$$B_i(X) = \min_{l \in X}\min_{j: c(j)=l}\{ C_j =\max\{r_j, B_i(X\setminus\{l\})\} + p_{ij} : C_j \le d_j\}.$$ 
	\end{lemma}
	\begin{proof}
		In a schedule on one machine with $|X|$ jobs using all colors from $X$, one job should be scheduled as last, defining the makespan. So for all possible jobs $j$, we compute what the minimal end time would be if $j$ was scheduled at the end of the schedule. This $j$ cannot start before its release date or before all other colors are scheduled.
	\end{proof}
	Next, define for $ 1 \le i \le m$ and~$X\subseteq [k]$, $A_i(X)$ to be $1$ if  $B_i(X) \le  C_{\max}$, and to be $0$ otherwise.	
	So $A_i(X)=1$ if and only if $|X|$ jobs, each from a different color of $X$, can be scheduled on machine $i$ before $C_{\max}$. A colorful feasible schedule exists if and only if there is some partition $X_1,...,X_m$ of $\{1,..,k\}$ such that $\Pi_{i=1}^m A_i(X_i) = 1$. The \emph{subset convolution} of two functions is defined as $(A_i * A_{i'}) (X) = \sum_{Y\subseteq X} A_i(Y) A_{i'}(X\setminus Y)$. Then $\Pi_{i=1}^m A_i(X_i) = 1$ if and only if $(A_1*\cdots*A_m)(\{1,...,k\}) >0$. The value of $(A_1*\cdots*A_m)(\{1,...,k\}) >0$ can be computed in $2^k k^{\mathcal{O}(1)}$ time using fast subset convolution \cite{bjorklund2007fourier}. 
	
	An overview of the randomized algorithm is given in Algorithm \ref{algorithm1}. If the $k$ jobs that are processed in an optimal solution are all in different colors, the algorithm outputs true. By standard analysis, $k$ jobs are all assigned different colors with probability at least $1/e^k$, and thus $e^k$ independent trials to boost the error probability of the algorithm to at most $1/2$.
	
	\begin{algorithm}[h!] 
		\SetAlgoLined
		For a given coloring $c$:\\
		\ForEach{$i = 1,...,m$}{
			\ForEach{$X \subseteq \{1,..,k\}$ in order of increasing size}{
				Compute $B_i(X)$ using Lemma \ref{lemmab}. \\
				Set $A_i(X)=1$ if $B_i(X)\le C_{\max}$, set $A_i(X) = 0$ otherwise.
		}}
		Compute $(A_1*\cdots*A_m)(\{1,...,k\})$ using fast subset convolution \cite{bjorklund2007fourier}. \\
		\If{$(A_1*\cdots*A_m)(\{1,...,k\})>0$}{\Return{TRUE}} 
		\caption{Algorithm for solving $R|r_j,d_j,k\text{-sched}|C_{\max}$}
		\label{algorithm1}
	\end{algorithm}
	
	By using the standard methods by Alon et al.~\cite{alon1995color}, Algorithm~\ref{algorithm1} can be derandomized.
\end{proof}


\section{Concluding Remarks}\label{sec:conc}
We classify all studied variants of partial scheduling parameterized by the number of jobs to be scheduled to be either in $\mathsf{P}$, $\mathsf{NP}$-complete and fixed-parameter tractable by $k$, or $\mathsf{W}[1]$-hard parameterized by $k$.
Our main technical contribution is an $\mathcal{O}(8^kk(|V|+|E|))$ time algorithm for $P|r_j,\text{prec},p_j=1|k\text{-sched}, C_{\max}$.

In a fine-grained sense, the cases we left open are cases 3-20 from Table~\ref{tab:tricho}.
We believe in fact algorithms in rows 5-6 and 10-20 are optimal: An $n^{o(k)}$ time algorithm for any case from result type $\result{[C]}$ or $\result{[D]}$ would imply either a $2^{o(n)}$ time algorithm for \textsc{Biclique} or an $n^{o(k)}$ time algorithm for \textsc{Partitioned Subgraph Isomorphism}, which both would be surprising.
It would be interesting to see whether for any of the remaining cases with precedence constraints and unit processing times a `sub-exponential' time algorithm exists.



A related case is $P3|\text{prec},p_j=1|C_{max}$ (where $P3$ denotes three machines). It is a famously hard open question (see e.g.~\cite{GareyJ79}) whether this can be solved in polynomial time, but maybe it is doable to try to solve this question in sub-exponential time, e.g. $2^{o(n)}$?



\bibliography{biblio}
\appendix


\section{Omitted Proofs from Section~\ref{Chapter:DP}}\label{sec:omittedproofs3}

\subsection{Correctness $\mathtt{fill(A,t)}$ sub-procedure: Proof of Lemma~\ref{lemmaR}}
Remember that the algorithm $\mathtt{fill(A,t)}$ checks if $S(A,t) = 1$ and if so, greedily schedules jobs from $\min(G - \pred(A))$ after $t$ in order of smallest release date. If~$k - |\pred(A)|$ jobs can be scheduled before $C_ {\max}$, it returns ‘true’ ($R(A,t) = 1$). Otherwise, it returns ‘false’ ($R(A,t) = 0$). 

First we show that if $\mathtt{fill(A,t)}$ returns `true', it follows that $R(A,t)=1$. Since $S(A,t)=1$, all jobs from $\pred(A)$ can be finished at time $t$. Take that feasible schedule and process $k-|\pred(A)|$ jobs from $\min(G-\pred(A))$ between $t$ and $C_{\max}$. This is possible because $\mathtt{fill(A,t)}$ is true. All predecessors of jobs in $\min(G-\pred(A))$ are in $\pred(A)$ and therefore processed before $t$. Hence, no precedence constraints are violated and we find a feasible schedule with the requirements, i.e. $R(A,t)=1$. 

For the other direction, assume that $R(A,t)=1$, i.e. we find a feasible schedule $\sigma$ where exactly the jobs from $\pred(A)$ are processed on or before $t$ and only jobs from $\min(G-\pred(A))$ are processed after $t$. Thus $S(A,t)=1$.
Define $M$ as the set of jobs processed after $t$ in $\sigma$. If~$M$ equals the set of jobs with the smallest release dates of $\min(G-\pred(A))$, we can also process the jobs of $M$ in order of increasing release dates. Then $\mathtt{fill(A,t)}$ will be `true', since $M$ has size at least $k-|\pred(A)|$. However, if $M$ is not that set, we can replace a job which does not have one of the smallest $k-|\pred(A)|$ release dates, by one which has and was not in $M$ yet. This new set can then still be processed between $t+1$ and $C_{\max}$ because smaller release dates impose weaker constraints. We keep replacing until we end up with $M$ being exactly the set of jobs with smallest release dates, which is then proved to be schedulable between $t$ and $C_{\max}$. Hence, $\mathtt{fill(A,t)}$ will return `true'.

Computing the set $\min(G - \pred(A))$ can be done in $\mathcal{O}(|V| + |E|)$ time. Sorting them on release date can be done in $\mathcal{O}(|V|k)$ time, as there are at most $k$ different release dates. Finally, greedily scheduling the jobs while checking feasibility can be done in $\mathcal{O}(|V|)$ time. Hence this algorithm runs in time $\mathcal{O}(|V|k + |E|)$.

\subsection{Bound on number of antichains: Proof of Lemma~\ref{cor:nrantichainsfort} }

To analyze the number of antichains $A$ with $d(A)\le k$, we give an upper bound on this number via an upper bound on the number of maximal antichains. Recall from the notations for posets, that for a maximal antichain $A$ we have $\comp(A) = V(G)$, and therefore $d(A) = |\pred(A)|$. The following lemma connects the number of antichains and maximal antichains of bounded depth:

\begin{lemma}\label{lem:dmax}
	For any antichain $A$, there exists a maximal antichain $A_{\max}$ such that $A\subseteq A_{\max}$ and $d(A)=d(A_{\max})$.
\end{lemma}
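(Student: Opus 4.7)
The natural candidate is $A_{\max} := A \cup \min(G - \comp(A))$. I will verify three things: (i) $A_{\max}$ is an antichain extending $A$, (ii) $A_{\max}$ is maximal, and (iii) $|\pred(A_{\max})| = |\pred(A)| + |\min(G - \comp(A))|$. Since maximality gives $d(A_{\max}) = |\pred(A_{\max})|$, these together imply the lemma.

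\textbf{Antichain and maximality.} Elements of $\min(G - \comp(A))$ are pairwise incomparable (any comparability between two minimal elements of the induced subgraph $G[V \setminus \comp(A)]$ would contradict minimality, since both endpoints lie in $V \setminus \comp(A)$). By definition of $\comp(A)$, they are also incomparable to every element of $A$, so $A_{\max}$ is indeed an antichain containing $A$. For maximality, take any $z \in V(G) \setminus A_{\max}$: if $z \in \comp(A)$, then $z$ is comparable to an element of $A \subseteq A_{\max}$; otherwise $z \in V \setminus \comp(A)$, so by definition of $\min$ there is some $m \in \min(G - \comp(A))$ with $m \preceq z$, and since $z \ne m$ we obtain $m \prec z$, so $z$ is comparable to $m \in A_{\max}$.

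\textbf{Predecessor count.} Clearly $\pred(A_{\max}) = \pred(A) \cup \pred(\min(G - \comp(A)))$, and both $\pred(A)$ and $\min(G - \comp(A))$ are subsets of $\pred(A_{\max})$. The key computation is to show
\[
\pred(\min(G - \comp(A))) \subseteq \pred(A) \cup \min(G - \comp(A)).
\]
Let $x \preceq m$ for some $m \in \min(G - \comp(A))$. If $x \in V \setminus \comp(A)$, then minimality of $m$ in the induced subgraph forces $x = m$. Otherwise $x \in \comp(A)$, so either $x \preceq a$ for some $a \in A$ (giving $x \in \pred(A)$), or $a \preceq x$ for some $a \in A$; but in the latter case $a \preceq x \preceq m$ would yield $a \preceq m$, contradicting $m \notin \comp(A)$. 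Combined with the fact that $\pred(A) \subseteq \comp(A)$ and $\min(G - \comp(A)) \cap \comp(A) = \emptyset$, the union is disjoint, so
\[
|\pred(A_{\max})| = |\pred(A)| + |\min(G - \comp(A))| = d(A).
\]
Since $A_{\max}$ is maximal, $\comp(A_{\max}) = V(G)$ and thus $d(A_{\max}) = |\pred(A_{\max})|$, finishing the proof.

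\textbf{Expected obstacle.} The only subtle point is ruling out ``extra'' predecessors coming from elements of $\comp(A)$ that sit above some $a \in A$ yet still lie below some $m \in \min(G - \comp(A))$; the short transitivity argument above handles this, but it is the one place where the definition of $\comp(A)$ (as opposed to $\pred(A)$) is used in an essential way. Everything else is a direct unpacking of the definitions.
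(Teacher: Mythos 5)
Your proof is correct and uses exactly the same construction as the paper ($A_{\max} = A \cup \min(G - \comp(A))$), with the same three verification steps; you merely spell out in more detail the maximality check and the transitivity argument showing that every proper predecessor of an element of $\min(G - \comp(A))$ already lies in $\pred(A)$, which the paper states more tersely.
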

\begin{proof}
	Let $A_{\max} = A \cup \min(G- \comp(A)) $. By definition, all elements in $\min(G- \comp(A))$ are incomparable to each other and incomparable to any element of $A$. Hence $A_{\max}$ is an antichain. Since $\comp(A_{\max}) = V(G)$, $A_{\max}$ is a maximal antichain. Moreover, $d(A)=~|\pred(A)| + |\min(G- \comp(A))| = |\pred(A_{\max})| = d(A_{\max})$, since the elements in $\min(G- \comp(A))$ are minimal elements and all their predecessors are in $\pred(A)$ besides themselves.
\end{proof}

By Lemma~\ref{lem:dmax}, we see that all antichains are a subset of maximal antichains with the same depth. For any (maximal) antichain $A$ with $d(A)\le k$, $|A|\le k$. Hence we observe:

\begin{corollary}
	$$|\{A:A\text{ antichain}, d(A)\le k\}| \le 2^k|\{A:A\text{ maximal antichain}, d(A)\le k\}|.$$ 
\end{corollary}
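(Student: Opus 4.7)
\medskip

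The plan is to use Lemma~\ref{lem:dmax} as the bridge between arbitrary antichains and maximal antichains, and then bound, for each maximal antichain of depth at most $k$, the number of sub-antichains it can be responsible for. Concretely, I would define a (not necessarily injective) map
\[
\phi \colon \{A : A\text{ antichain},\; d(A)\le k\}\;\longrightarrow\;\{A_{\max} : A_{\max}\text{ maximal antichain},\; d(A_{\max})\le k\}
\]
by letting $\phi(A)$ be the maximal antichain $A_{\max}=A\cup \min(G-\comp(A))$ produced in the proof of Lemma~\ref{lem:dmax}. That lemma guarantees both $A\subseteq \phi(A)$ and $d(\phi(A))=d(A)\le k$, so $\phi$ is well defined into the claimed codomain.

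The second step is to bound the preimage size of each maximal antichain. For any maximal antichain $A_{\max}$ we have $A_{\max}\subseteq \pred(A_{\max})$ (every element is a predecessor of itself), and since $A_{\max}$ is maximal the depth simplifies to $d(A_{\max})=|\pred(A_{\max})|$. Therefore $|A_{\max}|\le |\pred(A_{\max})|=d(A_{\max})\le k$. Because every $A\in \phi^{-1}(A_{\max})$ satisfies $A\subseteq A_{\max}$, the preimage is contained in the power set of $A_{\max}$, and hence has size at most $2^{|A_{\max}|}\le 2^k$.

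Summing over the codomain then yields
\[
|\{A:A\text{ antichain},\, d(A)\le k\}|
\;=\;\sum_{A_{\max}} |\phi^{-1}(A_{\max})|
\;\le\; 2^k\cdot |\{A_{\max}:A_{\max}\text{ maximal antichain},\, d(A_{\max})\le k\}|,
\]
which is exactly the stated inequality. There is no real obstacle here; the only subtlety to flag is that $\phi$ need not be injective (an antichain may be contained in several maximal antichains of the same depth), but this only helps, since the displayed equality becomes an inequality in the standard direction and we are aiming for an upper bound. The whole argument is essentially a one-line consequence of Lemma~\ref{lem:dmax} together with the observation $|A_{\max}|\le d(A_{\max})$ for maximal antichains.
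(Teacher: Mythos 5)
Your proof is correct and takes essentially the same route as the paper: the paper also invokes Lemma~\ref{lem:dmax} to place every antichain of depth at most $k$ inside a maximal antichain of equal depth, and then bounds the number of subsets of each such maximal antichain by $2^k$ using $|A_{\max}| \le |\pred(A_{\max})| = d(A_{\max}) \le k$. Your write-up merely makes explicit the map $\phi$ and the preimage-counting that the paper leaves implicit, which is a harmless (indeed clarifying) elaboration rather than a different argument.
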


This corollary allows us to restrict attention to only upper bounding the number of \emph{maximal} antichains of bounded depth.

\begin{lemma} \label{lamma:maxantichains}
	There are at most $2^k$ maximal antichains $A$ with $d(A)\le k$ in any precedence graph $G = (V,E)$, and they can be enumerated in $\mathcal{O}(2^kk(|V|+|E|))$ time.
\end{lemma}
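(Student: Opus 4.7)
My plan is to prove the $2^k$ bound and the enumeration algorithm together via a structural branching on a minimal vertex of $G$. The key observation is that for every maximal antichain $A$ of $G$ one has $\min(G) \subseteq \pred(A)$: each $v \in \min(G)$ must be comparable to some $a \in A$ by maximality of $A$, and since $v$ has no proper predecessor this forces $v \preceq a$, so $v \in \pred(A)$. In particular, $|\min(G)| \le k$ is a prerequisite. Assuming $G \neq \emptyset$, I fix any $v \in \min(G)$ and split the maximal antichains $A$ with $d(A) \le k$ according to whether $v \in A$ (Case A) or $v \notin A$ (Case B).

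In Case A, the map $A \mapsto A \setminus \{v\}$ is a bijection between Case A antichains and maximal antichains of $G' := G - \comp(\{v\})$ of depth at most $k-1$. Here $A \setminus \{v\} \subseteq V(G')$ by the antichain property, and is maximal in $G'$ because every $u \in V(G')$ is comparable in $G$ to some $a \in A$ with $a \neq v$. Minimality of $v$ together with the antichain property give $\pred_G(A \setminus \{v\}) \cap \comp(\{v\}) = \emptyset$, so the depth drops by exactly one. Conversely, $A' \cup \{v\}$ is a maximal antichain of $G$ of depth $d_{G'}(A') + 1$ for every maximal antichain $A'$ of $G'$ of depth at most $k-1$.

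In Case B, minimality of $v$ and maximality of $A$ force $v \prec a$ for some $a \in A$, whence $v \in \pred_G(A)$, and $A$ is a maximal antichain of $G - \{v\}$ with depth $d_G(A) - 1 \le k-1$. Conversely, a maximal antichain $A'$ of $G - \{v\}$ of depth at most $k-1$ lifts to a maximal antichain of $G$ precisely when $v \prec a$ for some $a \in A'$, which is decided by a single reachability query from $v$. Combining the two cases yields the recurrence $T(G, k) \le T(G - \comp(\{v\}), k-1) + T(G - \{v\}, k-1)$, with base cases $T(\emptyset, k) = 1$ and $T(G, 0) = 0$ for $G \neq \emptyset$, so induction on $k$ gives $T(G, k) \le 2^k$.

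The enumeration algorithm simply executes this branching, producing a recursion tree of depth at most $k$ with at most $2^{k+1} - 1$ nodes, each costing $\mathcal{O}(|V| + |E|)$ time to compute $\min(G)$, $\comp(\{v\})$, the induced subgraphs, and the Case B reachability filter, plus $\mathcal{O}(k)$ bookkeeping to assemble each emitted antichain; the total is $\mathcal{O}(2^k k (|V| + |E|))$. The main subtlety I expect is the correctness of the Case B filter: without rejecting those $A'$ for which $v$ is incomparable to all of $A'$, one would both double-count (since $A' \cup \{v\}$ is already produced in Case A) and return sets that are antichains but not maximal in $G$. It is precisely the minimality of the branching vertex $v$ that makes this filter a local, cheap reachability test and that makes Case A's depth bookkeeping collapse to ``$-1$'' rather than a larger decrement.
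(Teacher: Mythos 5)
Your proof is correct and is essentially the paper's argument: the paper partitions the maximal antichains according to the first minimal element $s_i \notin A$, which is exactly the unrolling of your binary in/out branching on a single minimal vertex, and both arguments use the depth as the decreasing induction measure with the same $2^k$ bookkeeping. One small point in your favour: you correctly treat the ``$v \notin A$'' branch as an injection that requires a reachability filter (not every maximal antichain of $G - \{v\}$ of depth at most $k-1$ lifts to a maximal antichain of $G$), whereas the paper asserts the corresponding identification $\mathcal{B}_i = \mathcal{A}_{k-i}(\cdot)$ as an equality when it is really only a containment --- harmless for the upper bound and for the enumeration after filtering, but your formulation is the more precise one.
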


\begin{proof}
	Let $\mathcal{A}_k(G)$ be the set of maximal antichains in $G$ with depth at most~$k$. We prove that $|\mathcal{A}_k(G)|\le 2^k$ for any graph $G$ by induction on $k$. Clearly $ |\mathcal{A}_0(G)|\le 1$ for any graph $G$, since the only antichain with $d(A) \le 0$ is $A = \emptyset$ if $G=\emptyset$.
	
	Let $k >0$ and assume $|\mathcal{A}_j(G)| \le 2^j$ for $j < k$ for any graph $G$. If we have a precedence graph $G$ with minimal elements $s_1,...,s_l$, we partition $\mathcal{A}_k(G)$ into $l+1$ different sets~$\mathcal{B}_1, \mathcal{B}_2,...,\mathcal{B}_{l+1}$. The set $\mathcal{B}_i$ is defined as the set of maximal antichains $A$ of depth at most~$k$ in which $s_1,...,s_{i-1}\subseteq A$, but $s_i \not \in A$. If $s_i \not \in A$, then $s_i \in \pred(A)$ since $A$ is maximal, so any such maximal antichain has a successor of $s_i$ in $A$. If we  define $S_j$ as the set of all successors of $s_j$ (including $s_j$), we see that $\mathcal{B}_i = \mathcal{A}_{k-i} \left(G - \left(\bigcup_{j=1}^{i-1}S_j \cup \{s_i\}\right)\right)$. Indeed, if~$A \in \mathcal{B}_i$, then $\{s_1,...,s_{i-1}\} \subseteq A$. Hence we can remove those elements and its successors from the graph, as they are comparable to any such antichain. Moreover, we can also remove~$s_i$ (but \emph{not} its successors) from the graph, since it is in $\pred(A)$. Thus $\mathcal{B}_i$ is then exactly the set of maximal antichains with depth $i$ less in the remaining graph. We get the following recurrence relation: 
	\begin{equation} 
	|\mathcal{A}_k(G)| = \sum_{i=1}^l \left|\mathcal{A}_{k-i} \left(G - \left(\bigcup_{j=1}^{i-1}S_j \cup \{s_i\}\right)\right)\right| +1,
	\label{recurrence}
	\end{equation} 
	since $|\mathcal{B}_{l+1}|$, the number of antichains satisfying $\{s_1,...,s_l \}\subseteq A$, is exactly one. Notice that we may assume that $l\le k$, because otherwise the depth of the antichain will be greater than~$k$. Then if we use the induction hypothesis that $|\mathcal{A}_j(G)| \le 2^j$ for $j < k$ for any graph~$G$, we see by (\ref{recurrence}) that:
	\begin{align*}
		|\mathcal{A}_k(G)| &= \sum_{i=1}^l \left|\mathcal{A}_{k-i} \left(G - \left(\bigcup_{j=1}^{i-1}S_j \cup \{s_i\}\right)\right)\right| +1, \\
		&\le 2^k \left(\sum_{i=1}^k \frac{1}{2^i} + \frac{1}{2^k} \right)\\
		&= 2^k.
	\end{align*}
	
	The lemma follows since the above procedure can easily be modified in an recursive algorithm to enumerate the antichains, and by using a Breadth-First Search we can compute $ G - \left(\bigcup_{j=1}^{i-1}S_j \cup \{s_i\}\right)$ in $O(|V|+|E|)$ time. Thus, each recursion step takes $\mathcal{O}(k(|V|+|E|))$ time. 
\end{proof}

Returning to (non-maximal) antichains, we find the following upper bound:

\begin{corollary} \label{cor:nrantichains}
	There are at most $4^k$ antichains $A$ with $d(A)\le k$ in any precedence graph $G=(V,E)$, and they can be enumerated within $\mathcal{O}(4^k(|V|+|E|))$ time.
\end{corollary}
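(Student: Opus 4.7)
\textbf{Proof plan for Corollary~\ref{cor:nrantichains}.} The plan is to derive both the counting bound and the enumeration bound by combining the two preceding results: Lemma~\ref{lem:dmax} (every antichain of depth $\le k$ is contained in a maximal antichain of the same depth) and Lemma~\ref{lamma:maxantichains} (there are at most $2^k$ maximal antichains of depth $\le k$, enumerable in $\mathcal{O}(2^k k(|V|+|E|))$ time). The scheme is to charge every antichain of bounded depth to a maximal antichain containing it, and then observe that each such maximal antichain has only $2^k$ subsets in total.

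First I would verify the key size bound: if $A$ is any antichain with $d(A) \le k$, then $|A| \le |\pred(A)| \le d(A) \le k$, since every element of $A$ lies in $\pred(A)$. Combined with Lemma~\ref{lem:dmax}, this means every antichain of depth $\le k$ arises as a subset of some maximal antichain $A_{\max}$ with $d(A_{\max}) \le k$ and $|A_{\max}| \le k$. Moreover, every subset of an antichain is itself an antichain (incomparability is hereditary), and its depth is at most that of $A_{\max}$ (its predecessor set is smaller, and the minimal elements outside its comparables only add what was already accounted for), so every subset of such an $A_{\max}$ is a candidate. Multiplying the $2^k$ maximal antichains from Lemma~\ref{lamma:maxantichains} by the $2^k$ subsets per maximal antichain yields the $4^k$ bound.

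For the enumeration, I would first run the algorithm of Lemma~\ref{lamma:maxantichains} to list all maximal antichains of depth $\le k$ in time $\mathcal{O}(2^k k(|V|+|E|))$. Then, for each such maximal antichain $A_{\max}$, I would iterate over its $2^k$ subsets, outputting each. To avoid reporting the same antichain several times (a non-maximal antichain can be a subset of multiple maximal ones), I would store the outputs in a hash table keyed by the sorted vertex list of the antichain, so duplicates are suppressed in time proportional to the size of the antichain. The total work is $\mathcal{O}(2^k k (|V|+|E|))$ for the maximal enumeration plus $\mathcal{O}(2^k \cdot 2^k \cdot k) = \mathcal{O}(4^k k)$ for generating and deduplicating subsets; using $k \le |V|$ this fits inside $\mathcal{O}(4^k(|V|+|E|))$.

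The one place where care is needed is the depth-monotonicity claim used implicitly above: I should confirm that taking a subset of an antichain does not increase its depth past $k$, so that every antichain I enumerate indeed satisfies $d(A) \le k$. This is the main obstacle to make rigorous, since $d(A)$ includes the minimal elements of $G - \comp(A)$ and shrinking $A$ enlarges $G - \comp(A)$; however, for any antichain $A \subseteq A_{\max}$, the elements of $\min(G - \comp(A))$ are either elements of $A_{\max} \setminus A$ or lie in $\min(G - \comp(A_{\max}))$, and in both cases they contribute to $\pred(A_{\max}) \cup \min(G-\comp(A_{\max}))$, so $d(A) \le d(A_{\max}) \le k$. This verifies that the enumeration is sound and completes the argument.
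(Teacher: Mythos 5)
Your proof is correct and follows essentially the same route as the paper: apply Lemma~\ref{lem:dmax} to place every antichain of depth at most $k$ inside a maximal antichain of the same depth, invoke Lemma~\ref{lamma:maxantichains} for the $2^k$ bound on such maximal antichains, observe that each has at most $k$ elements and hence at most $2^k$ subsets, and enumerate by listing those subsets. The one slip is your dichotomy for $\min(G-\comp(A))$ when $A\subseteq A_{\max}$: since $A_{\max}$ is maximal, $\min(G-\comp(A_{\max}))=\emptyset$, and an element of $\min(G-\comp(A))$ need not lie in $A_{\max}\setminus A$ either (it can be a strict predecessor of an element of $A_{\max}\setminus A$); the statement you actually need is that every such element lies in $\pred(A_{\max})$ (any $x\in\min(G-\comp(A))$ is comparable to some $a\in A_{\max}$ by maximality, and $x\succ a$ is impossible because $a$ would then be a predecessor of $x$ outside $\comp(A)$), which gives $d(A)\le d(A_{\max})\le k$ --- or you can drop this monotonicity claim entirely and simply filter the $4^k$ candidates by computing $d(A)$ in $\mathcal{O}(|V|+|E|)$ time each, which stays within the stated bound.
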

Notice that the runtime is indeed correct, as it dominates both the time needed for the construction of the set $\mathcal{A}_k(G)$ and the time needed for taking the subsets of $\mathcal{A}_k(G)$ (which is~$2^k|\mathcal{A}_k(G)|$).

We now restrict the number of antichains $A$ in $G^t$ with $d^t(A) \le k$. Take $G^t$ to be the graph in \autoref{cor:nrantichains} and notice that $d^t(A) = d(A)$ for any antichains $A$ in $G^t$. By Corollary~\ref{cor:nrantichains} we obtain Lemma~\ref{cor:nrantichainsfort}.


\subsection{Correctness algorithm: Proof of Lemma~\ref{lem:corr}}
To prove Lemma~\ref{lem:corr}, we need one more definition.

\begin{definition}
Let $\sigma$ be a feasible schedule. Then $A(\sigma)$ is the antichain such that $\pred(A(\sigma))$ is exactly the set of jobs that was scheduled in $\sigma$. 
\end{definition}
 Equivalently, if $X$ is the set of jobs processed by $\sigma$, then $A(\sigma)=\max(G[X])$.

Clearly, if $R(A,t)=1$ for some $t\le C_{\max}$ and antichain $A$ with $d^t(A)\le k$, we have a feasible schedule with $k$ jobs by definition of $R(A,t)$. Hence, it remains to prove that if a feasible schedule for $k$ jobs exists, then $R(B,t)=1$ for some $t\le C_{\max}$ and antichain $B$ with~$d^t(B)\le k$. Let 
$$\Sigma^* = \{\sigma | \sigma \text{ is a feasible schedule that processes } k \text{ jobs and has a makespan of at most } C_{\max}\},$$
so $\Sigma^*$ is the set of all possible solutions.
Define
$$\sigma^* = \underset{\sigma}{\text{argmin}}\{d(A(\sigma)) | \sigma \in \Sigma^*\}, $$
i.e. $\sigma^*$ is a schedule for which $A(\sigma^*)$ has minimal depth (with respect to $C_{\max}$). We now define $t$ and $B$ such that $R(B,t) =1$.

\begin{itemize}
    \item Let $t=\max\{t: $ job not in $ \max(G[\pred(A(\sigma^*))])$ was scheduled at time  $t \}$, so from $t+1$ and on, only maximal jobs (with respect to $G[\pred(A(\sigma^*))]$)  are scheduled.
    \item Let $M = \{x:$ job $x$ was scheduled at $t+1$ or later  in $\sigma^*$ $\}$.
    \item Let $B = \max(\pred(A(\sigma^*))\setminus M)$, so $\pred(B)$ is exactly the set of jobs scheduled on or before time $t$ in $\sigma^*$.
\end{itemize}

See Figure~\ref{fig:sigma*} for an illustration of these concepts. There are two cases to distinguish:

\subparagraph{$\mathbf{d^t(B)\le k}$.} In this case we prove that $R(B,t) =1$. The feasible schedule we are looking for in the definition of $R(B,t)$ is exactly $\sigma^*$. Indeed, all jobs from $\pred(B)$ were finished at time~$t$. Furthermore, all jobs in $M$ are maximal, so all their predecessors are in $\pred(B)$. Hence, $M\subseteq \min(G - \pred(B))$. So, by definition $R(B,t)=1$. 

\subparagraph{$\mathbf{d^t(B)> k}$.} In this case we prove that there is a schedule $\sigma'$ such that $d(A(\sigma')) < d(A(\sigma^*))$, i.e. we find a contradiction to that fact that $d(A(\sigma^*))$ was minimal. 
    This $\sigma'$ can be found as follows: take schedule $\sigma^*$ only up until time $t$. Let $C$ be a subset of $\min(G^t-\comp(B))$ such that $|C| = k - |\text{pred(B)}|$. This $C$ can be found since $d^t(B)\ge k$. Process the jobs in $C$ after time $t$ in $\sigma'$. These can all be processed without precedence constraint or release date violations, since their predecessors were already scheduled and $C\subseteq G^t$. So, we find a feasible schedule that processes $k$ jobs, called $\sigma'$. The choice of $\sigma'$ is depicted in \autoref{fig:sigma'}. Note that~$C\subseteq \min(G^t-\comp(B)) \subseteq \min(G-\comp(B))$ and not all jobs of $\min(G-\comp(B))$ are necessarily processed in $\sigma'$.

    It remains to prove that $d(A(\sigma')) < d(A(\sigma^*))$. Define $D(A) = \pred(A) \cup \min(G-\comp(A))$ for any antichain $A$. So $D(A)$ is the set of jobs that contribute to $d(A)$ and so $|D(A)|=d(A)$. We will prove that $D(B) = D(A(\sigma')) \subset D(A(\sigma^*))$. This will be done in two steps, first we show that $D(B) = D(A(\sigma')) \subseteq D(A(\sigma^*))$. In the last step we prove $D(B) \neq D(A(\sigma^*))$, which gives us $d(A(\sigma')) < d(A(\sigma^*))$.
    
    Notice that $C\subseteq D(B)$ since $C \subseteq \min(G-\comp(B))$, hence $D(B) = D(B\cup C)$. Since~$A(\sigma') = B \cup C$ it follows that $D(A(\sigma')) = D(B)$. Next we prove that $D(B) \subseteq D(A(\sigma^*)).$
    Clearly, if $x \in \pred(B)$ then $x \in \pred(A(\sigma^*))$.
    It remains to show that $x\in  \min(G-\comp(B))$ implies that $x\in  D(A(\sigma^*))$.
    If $x\in  \min(G-\comp(B))$, then either $x \in M$  or $x \not \in M$. If $x \in M$, then $x\in A(\sigma^*)$ so $x \in  \pred(A(\sigma^*))$. If $x \not \in M$, then $x \not \in \comp(B\cup M)$ since $x$ was a minimal element in $\min(G-\comp(B))$. Since $A(\sigma^*)\subseteq B\cup M$, and thus $\comp(A(\sigma^*)) \subseteq \comp(B \cup M)$,  we observe that $x \in \min(G-\comp(A(\sigma^*)))$. We then conclude that $D(B)\subseteq D(A(\sigma^*))$.
    
    We are left to show that $D(B) \neq D(A(\sigma^*))$. Remember that $t$ was chosen such that there is a job processed at time $t$ that was not in $\max(G[\pred(A(\sigma^*))])$. In other words, there was a job $x \in B$ in $\sigma^*$ at time $t$ with $y\in M$ such that $y \succ x$. Note that $y\not \in D(B)$, since $y\in M$, so $y$ is not in $\pred(B)$ and $y$ is clearly comparable to $x$. However, $y\in D(A(\sigma^*))$, so we find that $d(A(\sigma')) =d(B) < d(A(\sigma^*))$. Hence, we found a schedule with smaller $d(A(\sigma'))$, which leads to a contradiction.

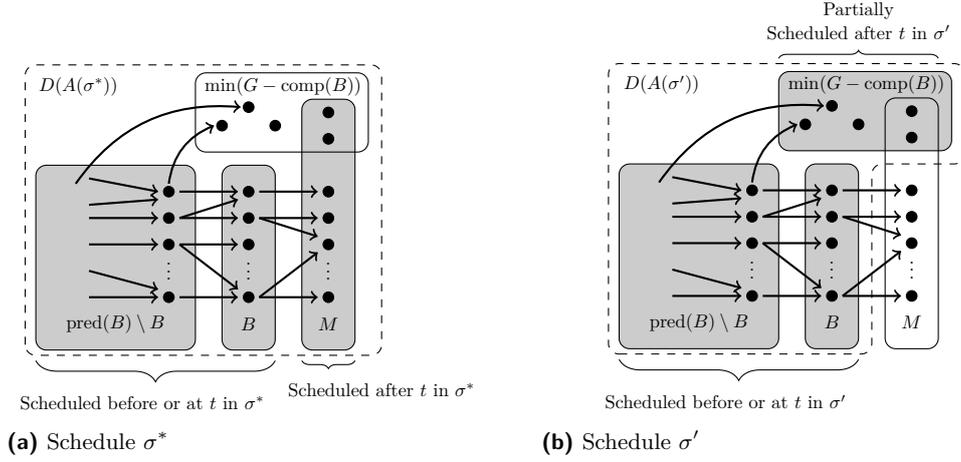
\begin{figure}
\centering
    \begin{subfigure}[b]{0.45\textwidth}
    \centering
     \begin{tikzpicture} [scale = 0.7, transform shape]
        \draw [rounded corners, fill = white!80!black] (4.5,-1) rectangle (5.5,2.5);
        \draw [fill] (5,0) circle [radius=0.1];
        \node at (5,0.6) {$\vdots$};
        \draw [fill] (5,1) circle [radius=0.1];
        \draw [fill] (5,1.5) circle [radius=0.1];
        \draw [fill] (5,2) circle [radius=0.1];
        \node at (5,-0.5) {$B$};
        
         \draw [rounded corners, fill = white!80!black] (1,-1) rectangle (4,2.5);
         \node at (2.5,-0.5) {$\pred(B)\setminus B$};       
        \draw [fill] (3.5,0) circle [radius=0.1];
        \node at (3.5,0.6) {$\vdots$};
        \draw [fill] (3.5,1) circle [radius=0.1];
        \draw [fill] (3.5,1.5) circle [radius=0.1];
        \draw [fill] (3.5,2) circle [radius=0.1];
        
        \draw [rounded corners, fill = white!80!black] (6,-1) rectangle (7,3.75);
        \node at (6.5,-0.5) {$M$};
        
        \draw [rounded corners] (4,2.75) rectangle (7.25,4.25);
        \node at (5.65,4) {$\min(G-\comp(B))$};

        \draw [fill] (6.5,0) circle [radius=0.1];
        \node at (6.5,0.6) {$\vdots$};
        \draw [fill] (6.5,1) circle [radius=0.1];
        \draw [fill] (6.5,1.5) circle [radius=0.1];
        \draw [fill] (6.5,2) circle [radius=0.1];
        \draw [fill] (6.5,3) circle [radius=0.1];
        \draw [fill] (6.5,3.5) circle [radius=0.1];

        \draw[->,thick] (5.2,2) to (6.3,2);
        \draw[->,thick] (5.2,0) to (6.3,0.85);
        \draw[->,thick] (5.2,1.5) to (6.3,1.5);
        \draw[->,thick] (5.2,0) to (6.3,0);       
        \draw[->,thick] (5.2,1.5) to (6.3,1.15);
        
        \draw[->,thick] (3.7,1) to (4.8,1); 
        \draw[->,thick] (3.7,2) to (4.8,2);
        \draw[->,thick] (3.7,1.5) to (4.8,1.85);
        \draw[->,thick] (3.7,1.5) to (4.8,1.5);
        \draw[->,thick] (3.7,0) to (4.8,0);       
        \draw[->,thick] (3.7,1) to (4.8,0.15);

        \draw[->,thick] (2,1) to (3.3,1); 
        \draw[->,thick] (2,2.25) to (3.3,2);
        \draw[->,thick] (2,1.75) to (3.3,1.85);
        \draw[->,thick] (2,1.5) to (3.3,1.5);
        \draw[->,thick] (2,0) to (3.3,0);       
        \draw[->,thick] (2,0.5) to (3.3,0.15);        
        
        \draw [fill] (5.5,3.25) circle [radius=0.1];
        \draw [fill] (4.5,3.25) circle [radius=0.1];   
        \draw [fill] (5,3.6) circle [radius=0.1];       
        
        \draw[->,thick] (3.5,2.15) to[bend left] (4.35,3.25);
        \draw[->,thick] (1.75,2.15) to[bend left] (4.85,3.6);

        \draw[decorate,decoration={brace,mirror,raise=4pt,amplitude=8pt}]  (1,-1)--(5.5,-1) ;
        \node at (3,-2) {Scheduled before or at $t$ in $\sigma^*$};
    
        \draw[decorate,decoration={brace,mirror,raise=4pt,amplitude=3pt}] (6,-1)--(7,-1) ;
        \node at (7.5,-1.75) {Scheduled after $t$ in $\sigma^*$};
        
         \draw [rounded corners, dashed] (0.8,-1.10) rectangle (7.5,4.4);
         \node at (1.8,4) {$D(A(\sigma^*))$};

    \end{tikzpicture}
    
        \caption{Schedule $\sigma^*$}
        \label{fig:sigma*}
    \end{subfigure}
    \hspace{0.5 cm}
    \begin{subfigure}[b]{0.45\textwidth}
    \centering
     \begin{tikzpicture} [scale = 0.7, transform shape]
        \draw [rounded corners, fill = white!80!black] (4.5,-1) rectangle (5.5,2.5);
        \draw [fill] (5,0) circle [radius=0.1];
        \node at (5,0.6) {$\vdots$};
        \draw [fill] (5,1) circle [radius=0.1];
        \draw [fill] (5,1.5) circle [radius=0.1];
        \draw [fill] (5,2) circle [radius=0.1];
        \node at (5,-0.5) {$B$};
        
         \draw [rounded corners, fill = white!80!black] (1,-1) rectangle (4,2.5);
         \node at (2.5,-0.5) {$\pred(B)\setminus B$};       
        \draw [fill] (3.5,0) circle [radius=0.1];
        \node at (3.5,0.6) {$\vdots$};
        \draw [fill] (3.5,1) circle [radius=0.1];
        \draw [fill] (3.5,1.5) circle [radius=0.1];
        \draw [fill] (3.5,2) circle [radius=0.1];

        \draw [rounded corners, fill= white!80!black] (4,2.75) rectangle (7.25,4.25);
        \node at (5.65,4) {$\min(G-\comp(B))$};

        \draw [rounded corners] (6,-1) rectangle (7,3.75);
        \node at (6.5,-0.5) {$M$};
        
        \draw [fill] (6.5,0) circle [radius=0.1];
        \node at (6.5,0.6) {$\vdots$};
        \draw [fill] (6.5,1) circle [radius=0.1];
        \draw [fill] (6.5,1.5) circle [radius=0.1];
        \draw [fill] (6.5,2) circle [radius=0.1];
        \draw [fill] (6.5,3) circle [radius=0.1];
        \draw [fill] (6.5,3.5) circle [radius=0.1];

        \draw[->,thick] (5.2,2) to (6.3,2);
        \draw[->,thick] (5.2,0) to (6.3,0.85);
        \draw[->,thick] (5.2,1.5) to (6.3,1.5);
        \draw[->,thick] (5.2,0) to (6.3,0);       
        \draw[->,thick] (5.2,1.5) to (6.3,1.15);
        
        \draw[->,thick] (3.7,1) to (4.8,1); 
        \draw[->,thick] (3.7,2) to (4.8,2);
        \draw[->,thick] (3.7,1.5) to (4.8,1.85);
        \draw[->,thick] (3.7,1.5) to (4.8,1.5);
        \draw[->,thick] (3.7,0) to (4.8,0);       
        \draw[->,thick] (3.7,1) to (4.8,0.15);

        \draw[->,thick] (2,1) to (3.3,1); 
        \draw[->,thick] (2,2.25) to (3.3,2);
        \draw[->,thick] (2,1.75) to (3.3,1.85);
        \draw[->,thick] (2,1.5) to (3.3,1.5);
        \draw[->,thick] (2,0) to (3.3,0);       
        \draw[->,thick] (2,0.5) to (3.3,0.15);        
        
        \draw [fill] (5.5,3.25) circle [radius=0.1];
        \draw [fill] (4.5,3.25) circle [radius=0.1];   
        \draw [fill] (5,3.6) circle [radius=0.1];       
        
        \draw[->,thick] (3.5,2.15) to[bend left] (4.35,3.25);
        \draw[->,thick] (1.75,2.15) to[bend left] (4.85,3.6);

        \draw[decorate,decoration={brace,mirror,raise=4pt,amplitude=8pt}]  (1,-1)--(5.5,-1) ;
        \node at (3,-2) {Scheduled before or at $t$ in $\sigma'$};
    
        \draw[decorate,decoration={brace,raise=4pt,amplitude=3pt}] (4,4.3)--(7,4.3) ;
        \node at (5.5,5) {Scheduled after $t$ in $\sigma'$};
        \node at (5.5,5.4) {Partially};
        
        \draw [rounded corners, dashed] (0.8,-1.10) -- (5.75,-1.10) -- (5.75,2.5) -- (7.5,2.5) -- (7.5,4.4) -- (0.8,4.4) -- cycle;
        \node at (1.8,4) {$D(A(\sigma'))$};
        
    \end{tikzpicture}
    
        \caption{Schedule $\sigma'$}
        \label{fig:sigma'}
    \end{subfigure}
    
    \caption{Visualization of the definitions of $M$ and $B$ and the schedules $\sigma^*$ in the proof of Lemma~\ref{lem:corr} is shown in \autoref{fig:sigma*}. \autoref{fig:sigma'} depicts the schedule $\sigma'$ as chosen in the subcase $d(B)>k$. The grey boxes indicate which jobs are processed in the schedules. We will prove that $|D(A(\sigma'))|<|D(A\sigma^*))|$.}
    \label{correctness}
\end{figure}

\section{Explicit Motivation all cases}\label{sec:cases}
For completeness and the readers convenience, we explain in this section for each row of Table~\ref{tab:tricho} how the upper and lower bounds are obtained. 

First notice that the most general variant $R|r_j,d_j,prec|k\text{-sched},C_{\max}$ can be solved in $n^{\mathcal{O}(k)}$ time as follows:
Guess for each machine the set of jobs that are scheduled on it, and guess how they are ordered in an optimal solution, to get sequences $\sigma_1,\ldots,\sigma_m$ with a joint length equal to $k$.
For each such $(\sigma_1,\ldots,\sigma_m)$, run the following simple greedy algorithm to determine whether the minimum makespan achieved by a feasible schedule that schedules for each machine $i$ the jobs as described in $\sigma_i$: Iterate $t=1,\ldots,n$ and schedule the job $\sigma_i(t)$ at machine $i$ as early as possible without violating release dates/deadline and precedence constraints (if this is not possible, return NO). Since each optimal schedule can be assumed to be normalized in the sense that no single job can be executed earlier, it is easy to see that this algorithm always returns an optimal schedule for some choice of $\sigma_1,\ldots,\sigma_m$.
Since there are only $n^{\mathcal{O}(k)}$ different sequences $\sigma_1,\ldots,\sigma_m$ of combined length $k$, the run time follows.

\begin{description}
	\item[Cases 1-2:] The polynomial time algorithms behind result \result{[A]} are obtained by a straightforward greedy algorithm: For $1| r_j,prec, p_j=1|k\text{-sched},C_{\max}$, build the schedule from beginning to end, and schedule an arbitrary job if any is available; otherwise wait until one becomes available.
	\item[Cases 3-4, 7-8:] The given lower bound is by Corollary~\ref{cor:lb3col}. 
	\item[Cases 5-6:] The upper bound is by the algorithm of Theorem~\ref{thm:main}. The lower bound is due to reduction by Jansen et al. \cite{jansen2016precedence}. In particular, if no sub exponential time algorithm for the \textsc{Biclique} problem exist, there exist no algorithms in $n^{o(k)}$ time for these problems. 
	\item[Case 9:] The lower bound is by Theorem~\ref{thm:nonunit}, which is a reduction from $k$\textsc{-Clique} and heavily builds on the reduction from $k$\textsc{-Clique} to $k$-\textsc{Tasks On Time} by Fellows and McCartin~\cite{fellows2003parametric}. This reduction increases the parameter $k$ to $\Omega(k^2)$, hence the lower bound of $n^{o(\sqrt{k})}$.
	\item[Cases 10-20:] The given lower bound is by Theorem~\ref{Thm:1|prec,rj|k-Cmax}, which is a reduction from \textsc{Partitioned Subgraph Isomorphism}. It is conjectured that there exist no algorithms solving \textsc{Partitioned Subgraph Isomorphism} in $n^{o(k)}$ time assuming ETH, which would imply that the $n^{\mathcal{O}(k)}$ algorithm for these problems cannot be improved significantly.
	\item[Cases 21-28:] Result \result{[E]} is established by a simple greedy algorithm that always schedules an available job with the earliest deadline.
	\item[Cases 29-31:] Result \result{[F]} is a consequence of Moore's algorithm \cite{moore1968n} that solves the problem $1||\sum_jU_j$ in $\mathcal{O}(n\log n)$ time. The algorithm creates a sequence $j_1,\dots,j_n$ of all jobs in earliest due date order. It then repeats the following steps: It tries to process the sequence (in the given order) on one machine. Let $j_i$ be the first job in the sequence that is late. Then a job from $j_1,\dots,j_i$ with maximal processing time is removed from the sequence.
	If all jobs are on time, it returns  the sequence followed by the jobs that have been removed from the sequence.
	Notice that this also solves the problem~$1|r_j|k\text{-sched},C_{\max}$, by reversing the schedule and viewing the release dates as the deadlines.
	\item[Cases 32:] The lower bound for this problem is a direct consequence of the reduction from \textsc{Knapsack} to $1|r_j|\sum_j U_j$ by Lenstra et al.~\cite{lenstra1977complexity}, which is a linear reduction. Jansen et al.~\cite{jansen2016bounding} showed that \textsc{Subset Sum} (and thus also \textsc{Knapsack}) cannot be solved in $2^{o(n)}$ time assuming ETH.
	\item[Cases 33-40:] Since $2||C_{\max}$ is equivalent to \textsc{Subset Sum} and can therefore not be solved in $2^{o(n)}$ time assuming ETH, as showed by Jansen et al.\cite{jansen2016bounding}. Therefore, its generalizations, in particular those mentioned in cases 33-40, have the same lower bound on run times assuming ETH.
\end{description}

\end{document}